\newcommand{\dd}{{\rm d}}
\begin{document}
%

\title{Chronological spacetimes without lightlike lines are stably causal}


\author{E. Minguzzi}
\institute{Dipartimento di Matematica Applicata, Universit\`a degli
Studi di Firenze,  Via S. Marta 3,  I-50139 Firenze, Italy \\
\email{ettore.minguzzi@unifi.it} }
\authorrunning{E. Minguzzi}

\date{}
\maketitle

\begin{abstract}
\noindent The statement of the title is proved. It implies that
under physically reasonable  conditions,  spacetimes which are free
from singularities are necessarily stably causal and hence admit a
time function. Read as a singularity theorem it states that if there
is some form of causality violation on spacetime then either it is
the worst possible, namely violation of chronology, or there is a
singularity. The analogous result: ``Non-totally vicious spacetimes
without lightlike rays are globally hyperbolic'' is also proved, and
 its physical consequences are explored.
\end{abstract}



\section{Introduction}
While the local structure of spacetime is fairly simple to describe,
there are still  a number of open problems concerning the causal
behavior of the spacetime manifold in the large. About three decades
ago Geroch and Horowitz in the conclusions of their review ``Global
structure of spacetimes'' \cite{geroch79} identified the problem of
giving good physical reasons for assuming stable causality as one of
the most important questions concerning the global aspects of
general relativity together with the proof of the cosmic censorship
conjecture. Indeed, if stable causality holds then the spacetime
does not suffer any pathological behavior connected with the
presence of almost closed causal curves, and, more importantly, it
admits a  (non-unique) time function \cite{hawking68}, that is a
function which is continuous and increases on every causal curve.

In order to understand the role of stable causality it is useful to
recall that most conformal invariant properties  can be ordered in
the so called causal ladder of spacetimes (see figure \ref{layer}).
If the real Universe were represented by a globally hyperbolic
manifold (the top of the ladder) then a number of mathematical and
physical nice properties would hold. The problem is that, though
there is evidence that the spacetime manifold evolves according to
the Einstein equations, it is not clear whether the evolution from
physically reasonable Cauchy data would introduce naked
singularities and would eventually produce a non-globally hyperbolic
spacetime. If so, the Cauchy data would be insufficient for the
determination of the spacetime geometry and one would have to take
into account the information coming from infinity. However, Penrose
gave arguments which support the view that the so developed manifold
would actually be globally hyperbolic \cite{penrose79} (strong
cosmic censorship).

Some other authors claim that one should only expect that the
non-predictable behavior due to naked singularities be confined
behind horizons (weak cosmic censorship). Other authors note that
there is not even compelling reasons for excluding chronologically
violating regions, in fact in some cases they allow to keep the
spacetime non-singular even in presence of trapped surfaces
\cite{newman89}. From this point of view chronology violating sets
should not be discarded a priori, instead they should be considered
in the same footing as naked singularities, a physical possibility
which hopefully remains hidden behind an horizon. These
considerations show that  the class of mathematically reasonable
spacetimes is considerably large, and therefore physicists look for
physical arguments which allow to get as close as possible to global
hyperbolicity. In short physicists look for results which allow to
climb the causal ladder.

The first step would be to justify the chronology property. Actually
this assumption is philosophically satisfactory  because its
violation would arise issues related to the free will of the generic
observer. However,  the notion of free will is not modeled in
general relativity, therefore it becomes reasonable to search for
other physical mechanisms, perhaps based on quantum mechanics, which
prevent the formation or stability of chronology violating sets. The
idea that such  a mechanism should indeed exist and that  starting
from well behaved initial conditions closed timelike curves can not
form has been referred by Hawking as the {\em chronology protection
conjecture} \cite{hawking92}. As I commented above there is no
general consensus on its validity and the  evidence coming from
classical general relativity is  under investigation
\cite{tipler77,thorne93,visser96,krasnikov02}.

It is natural to separate the remainder of the causal ladder in two
parts. That going from chronology up to stable causality (causality,
distinction, strong causality belong to it), and that going from
stable causality up to global hyperbolicity (passing through causal
continuity and causal simplicity). While the former part deals with
each time more demanding conditions conceived  to avoid almost
closed causal curves, the latter part presents each time more
demanding conditions in order to reduce the effects of points at
infinity on spacetime.

The problem of climbing the causal ladder from chronology up to
stable causality will be considered and solved in this work. It has
received less attention than the latter problem, that is, that of
going from stable causality up to global hyperbolicity which is
indeed more closely related to the { strong cosmic censorship
conjecture} \cite{penrose79}.

I am going to prove that {\em chronology plus the absence of
lightlike lines implies stable causality} (theorem \ref{vge}). The
theorem is formulated so that every mentioned property is
conformally invariant. It is therefore a theorem on the causal
structure of spacetime. In this respect it is important to use the
weaker assumption of {\em absence of lightlike lines} instead of the
more common null convergence, genericity and completeness
conditions, though these have a more direct physical meaning. In any
case the requirement of absence of lightlike lines  can be regarded
as a null completeness assumption, that is,  it follows from
demanding absence of singularities. I shall say more on this
correspondence in the first section. Thus the theorem physically can
be interpreted by saying that {\em under chronology, the absence of
singularities implies stable causality and hence the existence of a
time function}. It is the first result of this form which reduces
the existence of a time function to considerable less demanding
properties. Moreover,  note that in the previous statement the
required absence of singularities  is more precisely only a null
completeness requirement: the spacetime manifold could still be
timelike incomplete in a way compatible with the singularity
theorems (I shall say more on that in sections \ref{boj} and
\ref{ojb}).

Recall that stable causality is the best possible constraint in
order to remove almost closed causal curves and hence causality
violations. The theorem can then be regarded as a singularity
theorem, indeed, rewritten in the form {\em non-stably causal
spacetimes either are non-chronological or admit  lightlike lines}
receives the following physical interpretation {\em if there is a
form of causality violation on spacetime then either it is the worst
possible, namely violation of chronology, or the spacetime is
singular}. Regarded in this way the theorem clarifies the influence
of causality violations on singularities. In fact, if the violation
of chronology is regarded as a sort of singularity then the theorem
states that if there is no time function then the spacetime is
singular in this broader sense.

I refer the reader to \cite{minguzzi06c,minguzzi07b} for most of the
conventions used in this work. In particular, I denote with $(M,g)$
a $C^{r}$ spacetime (connected, time-oriented Lorentzian manifold),
$r\in \{3, \dots, \infty\}$ of arbitrary dimension $n\geq 2$ and
signature $(-,+,\dots,+)$. On $M\times M$ the usual product topology
is defined. For convenience and generality I often use the causal
relations on $M \times M$ in place of the more widespread point
based relations $I^{+}(x)$, $J^{+}(x)$, $E^{+}(x)$ (and past
versions). All the causal curves that we shall consider are future
directed (thus also the past rays). The subset symbol $\subset$ is
reflexive, $X \subset X$. Several versions of the limit curve
theorem will be repeatedly used, particularly those referring to
sequences of $g_n$-causal curves, where the metrics in the sequence
$g_n$ may differ. The reader is referred to \cite{minguzzi07c} for a
sufficiently strong formulation.

\section{Absence of lightlike lines}

In this section I consider the property of {\em absence of lightlike
lines} and comment on its physical meaning.

Two spacetimes belonging to the same conformal class $(M, \bm{g})$
share the same lightlike geodesics up to reparametrizations, and the
condition of maximality for the lightlike geodesic $\gamma$ reads
``there is no pair of events $x,z \in \gamma$, $(x,z) \in I^+$'',
which makes no mention to the full metric structure and hence is
independent of the representative of the conformal class. Thus, it
is convenient to give the following conformal invariant definition,

\begin{definition}
A lightlike line is an achronal inextendible causal curve.
\end{definition}

The definition implies, by achronality, that the causal curve is a
lightlike geodesic and that it  maximizes the Lorentzian length
between any of its points.

It is well known that \cite[Chap. 10, Prop. 48]{oneill83}

\begin{proposition}
If a inextendible lightlike geodesic admits a pair of conjugate
events then it is not a lightlike line.
\end{proposition}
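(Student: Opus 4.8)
The plan is to reduce the statement to the classical variational fact that a null geodesic ceases to be achronal strictly beyond a conjugate event, and then to observe that this directly contradicts the defining achronality of a lightlike line. Parametrize the inextendible lightlike geodesic $\gamma$ affinely and suppose $q=\gamma(s_2)$ is conjugate to $p=\gamma(s_1)$ along $\gamma$ with $s_1<s_2$. Since $\gamma$ is inextendible it is defined on an interval containing $s_2$ in its interior, so I may pick a point $r=\gamma(s_3)$ with $s_3>s_2$. The goal is to produce a timelike curve from $p$ to $r$, i.e.\ $(p,r)\in I^+$, which is the desired contradiction with achronality.

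First I would extract from the conjugacy a nontrivial Jacobi field $J$ along $\gamma|_{[s_1,s_2]}$ with $J(s_1)=J(s_2)=0$, working --- as one must for null geodesics --- in the screen bundle, the quotient of $\dot\gamma^{\perp}$ by the line field $\mathbb{R}\dot\gamma$, so that $J$ represents a genuine transverse infinitesimal deformation and is not merely a reparametrization of $\gamma$. This Jacobi field generates a variation of the null arc $\gamma|_{[s_1,s_2]}$ through null geodesics issuing from $p$ and focusing back toward $q$. Next I would use this focusing to build a broken null geodesic from $p$ to $r$: follow a neighboring null geodesic of the variation from $p$ to a point close to $q$, and then continue along $\gamma$ itself from that point to $r$. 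Because $q$ is conjugate to $p$, the transverse Jacobi field forces the neighboring null geodesic to arrive near $q$ with a tangent direction distinct, modulo $\dot\gamma$, from $\dot\gamma(s_2)$, so the junction is a genuine corner between two non-parallel null directions.

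I would then invoke the standard corner-rounding estimate of Lorentzian geometry: a future-directed causal curve which is piecewise geodesic but presents such a corner can be locally deformed, keeping its endpoints fixed, into a timelike curve. This yields $(p,r)\in I^+$. Finally, since $p$ and $r$ are both events of $\gamma$ and $(p,r)\in I^+$, the curve $\gamma$ fails to be achronal, hence it is not a lightlike line, which is what had to be shown.

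I expect the main obstacle to lie in the construction of the broken null geodesic and in the guarantee that the rounded curve is genuinely timelike. One must verify that the conjugate event really produces a corner transverse to $\dot\gamma$ and not a spurious one tangent to it, which is exactly where the analysis in the quotient screen bundle is needed, and one must control the second variation, that is the index form of the null arc, so as to ensure that just past $q$ it becomes indefinite rather than merely degenerate. The corner-rounding lemma itself is routine; the real content is the passage from ``degenerate index form at $q$'' to ``indefinite index form, hence strictly timelike nearby curves, beyond $q$.''
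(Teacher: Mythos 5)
Your argument is correct and is essentially the paper's own: the paper proves this proposition simply by citing \cite[Chap.~10, Prop.~48]{oneill83}, and your Jacobi-field/index-form construction of a timelike curve from $p=\gamma(s_1)$ to a point $r=\gamma(s_3)$ beyond the conjugate event is precisely the standard content of that cited result. The only point worth making explicit is that the maximal domain of an inextendible geodesic is an open interval, so a parameter value $s_3>s_2$ is always available.
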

 It can be proved that the notion of conjugate
points along a lightlike geodesic is conformally invariant
\cite{minguzzi06c}, thus the previous proposition relates two
conformal invariant properties. In particular note that the
requirement {\em every lightlike geodesic has a pair of conjugate
points} is stronger than {\em absence of lightlike lines}, e.g. 1+1
Minkowski spacetime with $x=0$ and $x=1$ identified. From the point
of view of Lorentzian geometry any statement should be formulated so
as to make its conformal invariance clear. For physical reasons some
authors prefer to mention physically motivated but non-conformal
invariant conditions. The consequence, however, is that several
results have been formulated in an unnecessary weak form as the
assumptions of the theorems are not really used.

\begin{definition}
An inextendible   lightlike geodesic $\gamma$  of the spacetime
$(M,g)$ satisfies the {\em generic condition} if at some $x \in
\gamma$ the tangent vector $n$ to the curve  is a {\em generic
vector}, that is, $n^{c} n^{d} n_{[a} R_{b] c d [e} n_{f]} \ne 0$. A
spacetime satisfies the {\em null generic condition} if every
inextendible lightlike geodesic satisfies the generic condition.
\end{definition}

A spacetime can be generic only if $n\ge 3$ (see \cite[Cor.
2.10]{beem96}). The precise sense in which the {\em null generic
condition} is generic is clarified by \cite[Prop. 2.15]{beem96}. It
is usually assumed on the physical ground that if a lightlike
geodesic does not satisfy it then arbitrarily small metric
perturbation in the geodesic path would make it true.

\begin{definition}
The spacetime $(M,g)$ satisfies the timelike convergence condition
if  $R(v,v) \ge 0$ for all timelike, and hence also for all
lightlike, vectors $v$. The spacetime $(M,g)$ satisfies the null
convergence condition if  $R(v,v) \ge 0$ for all lightlike vectors
$v$ (cf. \cite[p.95]{hawking73} \cite[Def. 12.8]{beem96}).
\end{definition}

Thus the null convergence condition is  a consequence of the
positivity of the energy density.

\begin{definition}
A spacetime $(M,g)$ is null geodesically complete if every
inextendible lightlike geodesic is complete.
\end{definition}

\begin{proposition} \label{mfe}
In a spacetime $(M,g)$  of dimension $\dim M \ge 3$, which satisfies
the null convergence condition, the null generic condition and that
is null geodesically complete every inextendible lightlike geodesic
admits a pair of conjugate events.  In particular $(M,g)$ does not
have lightlike lines.
\end{proposition}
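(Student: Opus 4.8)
The plan is to show that every inextendible lightlike geodesic $\gamma$ — which by null geodesic completeness is defined for all affine parameters $\lambda\in\mathbb{R}$ — focuses a neighbouring congruence of null geodesics, so that a pair of conjugate events appears; the absence of lightlike lines then follows at once from the previously cited O'Neill proposition together with the arbitrariness of $\gamma$. Fix such a $\gamma$ with tangent $K=\dot\gamma$ (writing $n=\dim M$ for the dimension), and set up a parallelly propagated pseudo-orthonormal frame isolating the $(n-2)$-dimensional screen space $K^{\perp}/\langle K\rangle$. In this frame the geodesic deviation is governed by the matrix Jacobi equation $A''+R\,A=0$, where $R(\lambda)$ is the symmetric tidal matrix with entries $g(\mathrm{Riem}(E_i,K)K,E_j)$ in the screen frame $\{E_i\}$. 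A pair of conjugate events corresponds exactly to a parameter at which a solution $A$ vanishing at an earlier parameter becomes singular.

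First I would translate the two curvature hypotheses into this language. The null convergence condition gives $\mathrm{tr}\,R=R_{ab}K^aK^b\ge 0$ at every $\lambda$, while the generic condition, after a pointwise reduction of the index expression $K^cK^dK_{[a}R_{b]cd[e}K_{f]}$, is equivalent to $R(\lambda_*)\neq 0$ as an endomorphism of the screen space at some $\lambda_*$ — the antisymmetrizations serving precisely to project out the components of the curvature along $K$. The heart of the argument is then a Riccati/focusing estimate. Assuming for contradiction that $\gamma$ carries no conjugate events on all of $\mathbb{R}$, disconjugacy on the complete line yields a self-adjoint solution $S=A'A^{-1}$ of the Riccati equation $S'+S^2+R=0$ defined for every $\lambda$. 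Its trace $\theta=\mathrm{tr}\,S$ obeys $\theta'=-\mathrm{tr}(S^2)-\mathrm{tr}\,R$, and since $S$ is symmetric $\mathrm{tr}(S^2)\ge \theta^2/(n-2)$, so with null convergence
\[
\theta'\;\le\;-\frac{\theta^2}{\,n-2\,}.
\]

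A solution of this scalar inequality defined on all of $\mathbb{R}$ must vanish identically: if $\theta(\lambda_1)>0$ it blows up to $+\infty$ at a finite parameter to the past, and if $\theta(\lambda_1)<0$ it blows up at a finite parameter to the future, either outcome contradicting global existence — and this is exactly the point where completeness of $\gamma$ is indispensable. Hence $\theta\equiv 0$, which forces $\mathrm{tr}(S^2)\equiv 0$ and $\mathrm{tr}\,R\equiv 0$; symmetry then gives $S\equiv 0$, and the Riccati equation yields $R\equiv 0$, contradicting $R(\lambda_*)\neq 0$. Therefore conjugate events must exist along $\gamma$.

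The hard part will be the step from disconjugacy on the whole line to the existence of a \emph{globally defined} self-adjoint Riccati solution $S$: one cannot use an arbitrary initial-value solution of the matrix Jacobi equation, since it may develop singularities unrelated to conjugacy, and I would instead construct $S$ from the principal solution at $-\infty$, invoking matrix Sturm (disconjugacy) theory. This is precisely where the infinite affine domain supplied by null completeness enters. The remaining ingredients — the algebraic identification of the generic condition with $R(\lambda_*)\neq 0$, and the elementary blow-up comparison for the scalar Riccati inequality — are routine. With conjugate events established along an arbitrary inextendible lightlike geodesic, the cited proposition shows that no such geodesic can be a lightlike line, so $(M,g)$ admits none.
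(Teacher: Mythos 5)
Your argument is correct in substance, but it is worth noting that the paper does not actually prove this proposition: its ``proof'' is a one-line citation of Hawking--Ellis Prop.~4.4.5 and Beem--Ehrlich--Easley Prop.~12.17, which are exactly the focusing results you reconstruct. So you are supplying the content that the paper delegates to the literature. Your route is the oscillation-theoretic variant (closest to BEE Ch.~12): assume disconjugacy on all of $\mathbb{R}$, extract a globally defined self-adjoint Riccati solution from the principal solution at $-\infty$, and run the trace inequality $\theta'\le -\theta^2/(n-2)$ to force $\theta\equiv 0$, hence $S\equiv 0$ and $R\equiv 0$, contradicting genericity. The Hawking--Ellis version argues in the contrapositive direction instead, starting the Jacobi matrix at the point where the generic condition holds and showing the expansion must go negative and then blow up within finite affine parameter; both are sound, and yours has the advantage of isolating cleanly where each hypothesis enters ($\dim M\ge 3$ for the Cauchy--Schwarz step on the $(n-2)$-dimensional screen space, completeness for the global Riccati solution and the blow-up comparison, genericity for $R(\lambda_*)\ne 0$). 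The two points you flag as needing care are indeed the only nontrivial ones, and both are standard: the identification of the null generic condition $n^cn^dn_{[a}R_{b]cd[e}n_{f]}\ne 0$ with nonvanishing of the quotient tidal operator is a known algebraic lemma (it is also conformally well behaved, which matters for the paper's insistence on conformal invariance), and the disconjugacy-implies-principal-solution step is classical matrix Sturm theory, though you should be explicit that the self-adjointness of $S$ comes from the vanishing of the Wronskian-type form for the principal solution rather than being automatic for an arbitrary Jacobi solution.
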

\begin{proof}
It follows from the existence of some pair of conjugate points in
the lightlike geodesics accordingly to \cite[Prop. 4.4.5]{hawking73}
\cite[Prop. 12.17]{beem96}.
\end{proof}
This proposition has been improved by Tipler
\cite{tipler78,tipler78b} and Chicone and Ehrlich \cite{chicone80}
(see also Borde \cite{borde87}) by weakening the null convergence
condition to the  averaged null convergence condition. This
possibility is important because many quantum fields on spacetime
determine a stress-energy tensor and hence a Ricci tensor which does
not comply with the null convergence condition while it satisfies
the averaged null convergence condition.

Proposition \ref{mfe} implies that the condition of {\em absence of
lightlike lines} is quite reasonable from a physical point of view
at least if the spacetime is assumed to be non-singular (see also
the discussion in \cite[Sect. 4.4]{hawking73}) or just null
geodesically complete.

In the next sections I will prove that the assumption of {\em
absence of lightlike lines} has the effect of identifying the levels
of the causal ladder between chronology and stable causality. In
this respect the hard part will come with the inclusion of stable
causality. A key role will be played by the property of
$K$-causality introduced by Sorkin and Woolgar \cite{sorkin96}, and
for the last step by a new property which I study in the next
section.

%
%
%
%
%

\begin{figure}[ht]
\centering
\includegraphics[height=287pt]{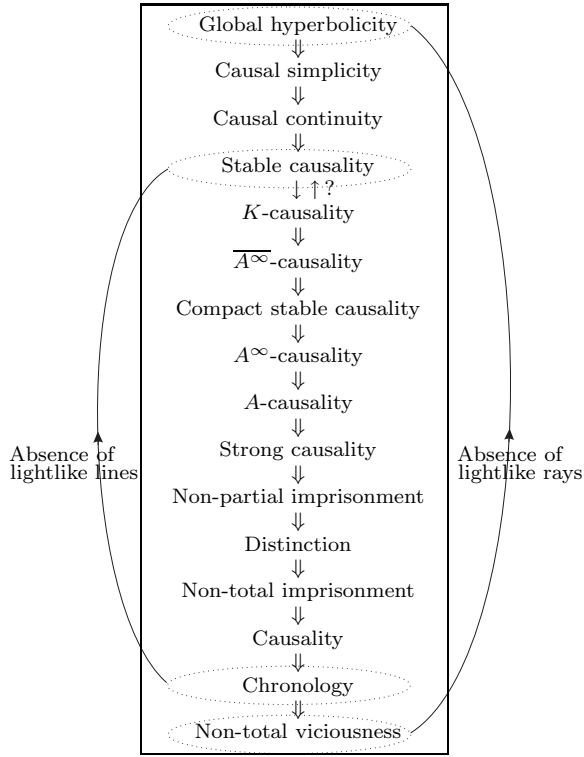}
\caption{The  causal  ladder displaying the new levels considered in
section \ref{vgq}. Penrose's infinite ladder between $A$-causality
and $A^{\infty}$-causality is omitted \cite{minguzzi07b}, as well as
the levels of weak distinction and feeble distinction
\cite{minguzzi07e}. For the placement of the non-imprisonment
properties the reader is referred to \cite{minguzzi07f}. The arrow
$C \Rightarrow D$ means that $C$ implies $D$ and there are examples
which show that $C$ differs from $D$. Stable causality implies
$K$-causality but it is not known if they coincide. The implications
climbing the ladder express the geometrical content of the theorems
proved in this work.} \label{layer}
\end{figure}

\clearpage

\section{Compact stable causality} \label{vgq}

Recall that a non-total imprisoning spacetime is a spacetime for
which there is no future-inextendible causal curve totally
imprisoned in a compact (future non-total imprisonment is equivalent
to past non-total imprisonment \cite{beem76,minguzzi07f}). It is
known that every relatively compact open set  in a non-total
imprisoning spacetime \cite{minguzzi07f} is stably causal when
regarded as a spacetime with the induced metric \cite{beem76}.
Actually, this property characterizes non-total imprisonment, indeed
we have
\begin{theorem}
A spacetime $(M,g)$ is non-total imprisoning iff for every
relatively compact open set  $B$, $(B,g\vert_{B})$ is stably causal.
\end{theorem}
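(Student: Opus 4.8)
The plan is to treat the two implications separately: the forward one is essentially already available in the literature, while the converse is the real content.

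For the forward implication, suppose $(M,g)$ is non-total imprisoning and let $B$ be relatively compact and open. Then $\bar B$ is compact and, since $M$ admits no future-inextendible causal curve totally imprisoned in a compact set, in particular none imprisoned in $\bar B$. Beem's result \cite{beem76}, quoted above, then yields at once that $(B,g\vert_{B})$ is stably causal. Nothing further is needed here.

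For the converse I argue by contraposition. Assume $(M,g)$ is \emph{not} non-total imprisoning, so there is a future-directed future-inextendible causal curve $\gamma\colon[0,\infty)\to M$, parametrized by $h$-arc length for an auxiliary complete Riemannian metric $h$, which is imprisoned in a compact set $K$. By local compactness I choose a relatively compact open set $B$ with $K\subset B$. Because $\gamma$ remains in the compact set $K\subset B$ it never approaches $\p B$, hence it is future-inextendible and imprisoned in $K$ also when read as a curve in $(B,g\vert_{B})$. I claim $(B,g\vert_{B})$ is not stably causal. Assume it were; by Hawking's theorem \cite{hawking68} there is a time function $t\colon B\to\mathbb{R}$, continuous and strictly increasing along every future-directed causal curve of $B$. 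Then $t\circ\gamma$ is strictly increasing, and since $\gamma\subset K$ with $K$ compact it is bounded above, so it converges to some $c\in\mathbb{R}$ as $s\to\infty$. Consequently every accumulation point $q=\lim_{j}\gamma(\tau_{j})$, $\tau_{j}\to\infty$, satisfies $t(q)=\lim_{j}t(\gamma(\tau_{j}))=c$; that is, the $\omega$-limit set of $\gamma$ lies in the level set $t^{-1}(c)$.

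To close the argument I must exhibit a genuine causal curve inside this level set. Fixing an accumulation point $p$ and a sequence $\gamma(s_{i})\to p$ with $s_{i}\to\infty$, I apply the limit curve theorem \cite{minguzzi07c} to the tails $\gamma\vert_{[s_{i},\infty)}$: being imprisoned in $K$ and of infinite $h$-length, a subsequence converges $h$-uniformly on compacta to a future-inextendible causal limit curve $\lambda$ issuing from $p$, of infinite $h$-length and therefore non-constant. Each point of $\lambda$ is a limit of points $\gamma(s)$ with $s\to\infty$, so $\lambda$ lies in the $\omega$-limit set and hence $t\equiv c$ along $\lambda$. This contradicts the strict monotonicity of $t$ along the non-constant causal curve $\lambda$, proving that $(B,g\vert_{B})$ is not stably causal and completing the contrapositive.

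The main obstacle is exactly this last step: one must ensure that the object $\lambda$ extracted from the imprisoned curve is a \emph{non-degenerate} causal curve, not a single point, and that it genuinely sits in the $\omega$-limit set. Both facts are supplied by the $h$-arc-length formulation of the limit curve theorem, which delivers a limit curve of the same (infinite) $h$-length as the converging tails; it is precisely the confinement of $\gamma$ to the compact set $K$ that makes that formulation applicable. The remainder is routine bookkeeping with the time function.
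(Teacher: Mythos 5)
Your proof is correct, but the converse direction follows a genuinely different route from the paper's. The paper invokes a structural result from \cite{minguzzi07f} --- inside the imprisoning compact $C$ there is a lightlike line $\eta$ with $\eta\subset\Omega_f(\eta)$ --- and then, for \emph{every} metric $g'>g$ on $B\supset C$, explicitly closes up a $g'$-timelike curve through a point of $\eta$, thereby refuting stable causality of $(B,g\vert_B)$ directly from its definition in terms of widened light cones. You instead refute it through Hawking's characterization by time functions: the time function must be constant on the $\omega$-limit set of the imprisoned curve, and the limit curve theorem plants a non-degenerate causal curve inside that level set. Your route avoids the imprisonment machinery of \cite{minguzzi07f} but imports the nontrivial implication ``stable causality $\Rightarrow$ existence of a time function'' (which, for the record, is the \cite{hawking73} direction, not \cite{hawking68}); the paper's argument is more self-contained relative to its own toolbox and produces, for each widened metric, an explicit closed timelike curve. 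Two small polish points: choose $B$ connected (take the component of $B$ meeting the tail of $\gamma$) so that $(B,g\vert_B)$ is a spacetime in the usual sense; and note that an even shorter version of your argument is available, since stable causality implies strong causality, which already forbids a future-inextendible causal curve of $(B,g\vert_B)$ from being imprisoned in a compact subset of $B$ --- your time-function argument is essentially a hands-on proof of that implication.
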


\begin{proof}
The implication to the right was proved by Beem \cite{beem76}. To
the left, assume $(M,g)$ has a compact $C$ in which some curve
$\gamma$ is future imprisoned. In \cite{minguzzi07f} I proved that
there is a lightlike line $\eta$ contained in $C$ such that
$\eta\subset \Omega_{f}(\eta)$ where $\Omega_f(\eta)$ is the set of
accumulation points in the future of $\eta$ (in analogy with the set
of $\omega$-limit points of dynamical systems). Let $B$ be a
relatively compact open set  such that $C\subset B$. Take $q \in
\eta$ and, given a convex neighborhood $U\ni q$, $U\subset B$, take
$p \in \eta \cap J^{-}_{(U,g\vert_U)}(q)$. Take $g'>g$ in $B$ ($g'$
need not be defined on $B^C$) then $p \in I^{-}_{(U,g')}(q)$, but
recall that $p \in \Omega_f(\eta)$ is an accumulation point for the
future-inextendible $g'$-timelike curve given by the portion of
$\eta$ which starts from $q$. Thus since $I^{-}_{(U,g'\vert_U)}$ is
open it is possible to construct a closed $g'$-timelike curve
contained in $B$. The argument holds for any choice of $g'$ thus it
is not true that for every relatively compact open set  $B$,
$(B,g\vert_{B})$ is stably causal.
%
%
\end{proof}

Note that non-total imprisonment is a quite weak property (it is
implied by weak distinction \cite{minguzzi07f}). A related problem
is that of establishing if, given an arbitrary  compact on
spacetime, the metric can be widened in it without introducing
closed causal curves in the {\em whole} spacetime. If this is
possible the spacetime satisfies a  condition which is stronger than
non-total imprisonment. We can define a new property

\begin{definition}
A spacetime $(M,g)$ is {\em  compactly stably causal} if for every
relatively compact open set $B$ there is a metric $g_B\ge g$ such
that $g_B>g$ on $B$, $g_B=g$ on $B^C$ and $(M,g_B)$ is causal.
\end{definition}

\begin{remark}
There are some equivalent definitions, for instance: $(M,g)$ is {\em
compactly stably causal} if for every compact set $C$ there is
$g_C\ge g$ such that $g_C>g$ on $C$ and $(M,g_C)$ is causal. In
order to prove the equivalence one has to take appropriate convex
combinations of metrics with smooth coefficients.
\end{remark}
%
%

Some natural questions arise, among them the placement of compact
stable causality in the causal ladder of spacetimes. Before
considering this question let me recall some notation and
terminology \cite{minguzzi07b}. Following Woodhouse
\cite{woodhouse73,akolia81} I denote with $A^+$ the closure of the
causal relation, that is $A^{+}=\bar{J}^+$. A spacetime is
$A^\infty$-causal if there is no finite cyclic chain of distinct
$A^+$-related events. This property is equivalent to the
antisymmetry of the relation $A^{+\infty}=\cup^{+\infty}_{i=1}
(A^{+})^i$, which is the smallest transitive relation containing
$A^{+}$. Analogously, a spacetime is $\overline{A^{\infty}}$-causal
if the relation $\overline{A^{+\infty}}$ is antisymmetric. The
relation $K^{+}$ is the smallest closed and transitive relation
containing $J^{+}$, and the spacetime is $K$-causal if the relation
$K^{+}$ is antisymmetric \cite{sorkin96}. It is known that stable
causality implies $K$-causality, although it is not known if these
two conditions coincide \cite{minguzzi07}. We have

%
%

\begin{theorem} \label{mjh}
$K$-causality implies $\overline{A^{\infty}}$-causality.
 \end{theorem}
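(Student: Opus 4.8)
The plan is to prove the set-theoretic inclusion $\overline{A^{+\infty}} \subseteq K^{+}$, from which the statement follows at once: $K^{+}$ is antisymmetric by hypothesis, and every subrelation of an antisymmetric relation is antisymmetric, so $\overline{A^{+\infty}}$ is antisymmetric too, which is precisely $\overline{A^{\infty}}$-causality. Thus the whole content of the theorem should reduce to verifying that $\overline{A^{+\infty}}$ never escapes $K^{+}$, and no separate argument about cyclic chains of $\overline{A^{+\infty}}$-related events is needed.

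To establish the inclusion I would proceed in three short steps, each invoking one of the two defining properties of $K^{+}$, namely that it is closed and transitive and contains $J^{+}$. First, since $J^{+}\subseteq K^{+}$ and $K^{+}$ is closed, taking closures gives $A^{+}=\bar{J}^{+}\subseteq \bar{K}^{+}=K^{+}$. Second, since $K^{+}$ is transitive and now contains $A^{+}$, monotonicity of composition yields $(A^{+})^{i}\subseteq (K^{+})^{i}\subseteq K^{+}$ for every $i$, so that $A^{+\infty}=\bigcup_{i=1}^{+\infty}(A^{+})^{i}\subseteq K^{+}$. Third, taking closures once more and using closedness of $K^{+}$ again, $\overline{A^{+\infty}}\subseteq \bar{K}^{+}=K^{+}$, as desired.

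The argument is in essence an unwinding of the fact that $K^{+}$ is, by construction, the smallest closed transitive relation containing $J^{+}$, while $\overline{A^{+\infty}}$ is built from $J^{+}$ by the very same two operations (transitive closure and topological closure) but applied in a prescribed order and only finitely many times in the transitive step. The one point that deserves a moment of care is the second step: it is the transitivity of $K^{+}$ itself, not merely that of the intermediate relation $A^{+\infty}$, that absorbs all finite compositions $(A^{+})^{i}$ back into $K^{+}$; losing track of which relation is being composed would be the only way to go astray. I do not anticipate any genuine obstacle, so the difficulty of the result lies entirely in having the right definitions in hand rather than in the estimates.
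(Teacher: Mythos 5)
Your proposal is correct and follows essentially the same route as the paper: the paper's proof likewise establishes $\overline{A^{+\infty}}\subset K^{+}$ by observing that any relation built from $J^{+}$ by closure and transitivization stays inside the closed, transitive relation $K^{+}$, and then concludes antisymmetry of the subrelation from antisymmetry of $K^{+}$. You have merely spelled out the three steps that the paper compresses into one sentence.
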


\begin{proof}
Since $J^{+}\subset K^{+}$, any causal relation obtained from
$J^{+}$ by taking closures or by making the relation transitive
through the replacement $R^{+} \to \cup^{+\infty}_{i=0} (R^{+})^i$,
is still contained in $K^{+}$. Since $\overline{A^{+\infty}}$ has
this form $\overline{A^{+\infty}}\subset K^{+}$,   thus
$K$-causality implies $\overline{A^{\infty}}$-causality.
\end{proof}

\begin{remark}
Given a relation $R^{+}$ the two involutive operations given by (a)
closure: $R^{+} \to \bar{R}^{+}$, and (b) transitivization: $R^{+}
\to R^{+\infty}=\cup^{+\infty}_{i=1} (R^{+})^i$, once alternatively
applied to $J^{+}$ generate a chain of relations all contained in
$K^{+}$ whose first members are $J^{+}$, $A^{+}$, $A^{+\infty}$,
$\overline{A^{+\infty}}, \cdots$. By demanding the antisymmetry one
obtains a ladder of causal properties whose first members are
causality, $A$-causality, $A^{\infty}$-causality and
$\overline{A^{\infty}}$-causality, all necessarily weaker than
$K$-causality. If at a certain point two adjacent relations coincide
then they coincide with $K^+$ as they are both closed an transitive
and they are certainly the smallest relations with this property. In
this case the mentioned ladder of relations finishes there where
this coincidence occurs. As we shall see, the mentioned first levels
are all different but it is not known if from some point on the
levels would start to coincide, that is, if after a finite number of
operations of closure and transitivization one would get $K^{+}$ and
$K$-causality. Examples support the view that this coincidence
occurs at a level which increases with the dimensionality of the
spacetime.


\end{remark}

\begin{lemma}
Let $\circ$ denote the composition of relations, then $J^{+}\circ
A^{+}\subset A^{+}$  and $A^{+}\circ J^{+}\subset A^{+}$.
\end{lemma}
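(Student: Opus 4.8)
The plan is to reduce both inclusions to statements about the \emph{open} chronological relation $I^+$, exploiting the identities $A^+=\bar J^+=\bar I^+$ (the last because $I^+\subset J^+\subset\overline{I^+}$) together with the standard push-up properties $I^+\circ J^+\subset I^+$ and $J^+\circ I^+\subset I^+$. The reason a naive limit-curve argument is awkward is precisely that $J^+$ is not closed, so one cannot directly take limits of the causal curves realizing the $J^+$ part of the composition; routing everything through the open relation $I^+$ avoids this.

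First I would establish a convenient description of $A^+$ purely in terms of $I^+$: a pair $(x,z)$ lies in $A^+=\bar I^+$ \emph{if and only if} $x\in\overline{I^-(z')}$ for every $z'$ with $(z,z')\in I^+$; dually, $(x,z)\in A^+$ iff $z\in\overline{I^+(x')}$ for every $x'$ with $(x',x)\in I^+$. This equivalence is where the genuine content sits, and I expect it to be the main obstacle. The forward direction extracts an approximating chronological pair $(x_n,z_n)\to(x,z)$, notes that $z_n$ eventually enters the open set $I^-(z')$, and concludes $x_n\in I^-(z')$; the reverse direction reconstructs an approximating sequence inside $V\times W$ for arbitrary neighbourhoods $V\ni x$, $W\ni z$ by choosing $z'\in W\cap I^+(z)$ and using $x\in\overline{I^-(z')}$. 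Both steps rely only on the openness of $I^+$ and of its point sections.

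With this characterization in hand the two inclusions follow at once. For $A^+\circ J^+\subset A^+$, take $(x,y)\in A^+$, $(y,z)\in J^+$, and an arbitrary $z'$ with $(z,z')\in I^+$; the push-up $J^+\circ I^+\subset I^+$ gives $(y,z')\in I^+$, and applying the characterization to the pair $(x,y)$ with this $z'$ yields $x\in\overline{I^-(z')}$. Since $z'$ was arbitrary, the characterization applied now to $(x,z)$ gives $(x,z)\in A^+$. For $J^+\circ A^+\subset A^+$ I would argue dually: given $(x,y)\in J^+$, $(y,z)\in A^+$, and an arbitrary $x'$ with $(x',x)\in I^+$, the push-up $I^+\circ J^+\subset I^+$ gives $(x',y)\in I^+$, and the dual characterization applied to $(y,z)$ gives $z\in\overline{I^+(x')}$; as $x'$ was arbitrary this yields $(x,z)\in A^+$.
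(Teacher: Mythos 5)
Your proof is correct, and it rests on the same three ingredients as the paper's argument --- the identity $A^+=\bar J^+=\bar I^+$, the push-up properties $x'\ll x\le y\Rightarrow x'\ll y$ and $y\le z\ll z'\Rightarrow y\ll z'$, and the openness of $I^+$ --- but you package the topological content differently. The paper argues directly: for $(x,y)\in J^{+}$ and $(y,z)\in A^{+}$ it takes a sequence of causal curves with endpoints $(y_n,z_n)\to(y,z)$ and an auxiliary sequence $x_k\in I^{-}(x)$, $x_k\to x$; openness of $I^{+}$ gives $x_k\ll y_n$ for large $n$, push-up gives $(x_k,z_{n(k)})\in I^{+}$, and passing to the limit lands in $\bar I^{+}=A^{+}$. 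You instead isolate the sequence-and-openness step once and for all in the characterization ``$(x,z)\in A^{+}$ iff $x\in\overline{I^{-}(z')}$ for every $z'\gg z$'' (and its dual), after which both inclusions reduce to pure relation algebra via push-up. Your characterization is correct as stated --- the forward direction is the argument you sketch, and the reverse direction needs only that every neighbourhood of $z$ meets $I^{+}(z)$, which holds in any spacetime --- so the extra lemma buys modularity and reusability. Note, though, that your stated motivation (that a naive limit argument is awkward because $J^{+}$ is not closed) does not really apply to the paper's proof: there the $J^{+}$ pair is kept fixed and only the $A^{+}$ pair is approximated by chronological pairs, so no limit of the causal curve realizing the $J^{+}$ part is ever taken.
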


\begin{proof}
Let us consider the former case, the latter being analogous. Let
$(x,y) \in J^{+}$ and $(y,z) \in A^{+}$, and let $\gamma_n$ be a
sequence of causal curves of endpoints $(y_n,z_n) \to (y,z)$. Take
$x_k\in I^{-}(x)$, $x_k \to x$, so that $x_k\ll y$ and for
sufficiently large $n$, $x_k\ll y_n\le z_n$, thus $(x_k,z_{n(k)})
\in I^{+}$ and in the limit $(x,z) \in A^+$.
\end{proof}

\begin{theorem}
$\overline{A^{\infty}}$-causality implies compact stable causality.
\end{theorem}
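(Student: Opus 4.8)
The plan is to prove the contrapositive: if $(M,g)$ is not compactly stably causal, then it is not $\overline{A^{\infty}}$-causal. So I would start by assuming there is a relatively compact open set $B$ for which \emph{every} widening of the metric on $B$ (with $g_B\ge g$, $g_B>g$ on $B$, $g_B=g$ on $B^C$) produces a closed causal curve. Let me think about what this gives me. Taking a sequence of metrics $g_k>g$ on $B$ with $g_k\to g$ (shrinking the light cones back down to those of $g$), each $g_k$ admits a closed $g_k$-causal curve $\sigma_k$. Each such curve must pass through $B$, since outside $B$ the metric is unchanged and causality violation can only be seeded where the cones were genuinely widened.

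The key step is then to apply the limit curve theorem for sequences of $g_k$-causal curves with varying metrics $g_k\to g$ — exactly the formulation the author flagged in the introduction as available in reference~\cite{minguzzi07c}. The closed curves $\sigma_k$ pass through the relatively compact set $B$, so I can extract a point $p\in\bar B$ through which infinitely many pass, and the limit curve theorem produces a limiting $g$-causal curve $\eta$ through $p$. The delicate point is to control the topology of the limit: each $\sigma_k$ is closed, so it returns to its starting point, and I want to extract from this cyclic structure a nontrivial $A^+$-chain in the limit. Concretely, I expect to obtain either a closed $g$-causal curve through $p$ (which would violate causality, hence \emph{a fortiori} $\overline{A^{\infty}}$-causality) or, in the degenerate case where the limit curve collapses, a finite chain of points $p=x_0,x_1,\dots,x_m=p$ with consecutive pairs in $\overline{A^{+\infty}}$ but not reducible to equality, witnessing a violation of antisymmetry of $\overline{A^{+\infty}}$.

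The main obstacle, and where I would spend the most care, is the collapsing case: the limit curve theorem for varying metrics can produce a limit curve that is either inextendible or has finite extent, and a closed causal curve for $g_k$ need not converge to a closed $g$-causal curve — the "gap" in the loop may close only in the limiting closed relation $A^+$ rather than via an actual $g$-causal curve. This is precisely why $\overline{A^{\infty}}$-causality, rather than plain causality, is the natural conclusion: the broken pieces of the limit, each a genuine $g$-causal curve (hence giving $A^+$-related, indeed $J^+$-related, endpoints), must be reglued across the points where the curves escaped $B$, and this regluing is exactly an application of closure and transitivization of $J^+$. I would make this rigorous by tracking the finitely many points where the limiting broken curve fails to join up, using the relatively compact set $B$ to guarantee these accumulation points exist, and showing each gap is bridged by an $A^+$-relation obtained as a limit of the $g_k\to g$ timelike relations.

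Finally I would assemble these segments: the endpoints of the limiting causal segments are $J^+$-related, and the gaps contribute $A^+$-relations; composing them around the loop yields a cyclic chain $p\;\overline{A^{+\infty}}\;\cdots\;\overline{A^{+\infty}}\;p$ with distinct intermediate events, contradicting $\overline{A^{\infty}}$-causality. The preceding lemma, $J^{+}\circ A^{+}\subset A^{+}$ and $A^{+}\circ J^{+}\subset A^{+}$, should streamline this assembly by letting me absorb the genuine causal segments into the closed relation and show the whole loop lies inside $\overline{A^{+\infty}}$, so that the existence of the loop directly contradicts antisymmetry of $\overline{A^{+\infty}}$.
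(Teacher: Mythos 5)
Your overall strategy (contrapositive, a sequence $g_k\to g$ of widened metrics with closed $g_k$-causal curves $\sigma_k$ through $B$, limit curve theorem for varying metrics, regluing the broken $g$-causal pieces with $A^+$-relations across the gaps) is the same as the paper's, but there is a genuine gap at the decisive step: you assume the limiting broken curve ``fails to join up'' at only \emph{finitely many} points, so that the loop closes after finitely many gaps and yields a finite cyclic chain $p=x_0,x_1,\dots,x_m=p$. That cannot work. The number of excursions of $\sigma_k$ outside $\bar B$ is not uniformly bounded in $k$ (in the spacetime of figure \ref{noncomp}, built with an irrational translation, it grows without bound), so the limit construction produces an \emph{infinite} forward chain $p^0,p^1,p^2,\dots$ of $A^+$-related distinct points in $\bar B$ that never returns to $p^0$ after finitely many steps. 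Indeed, if your finite-chain conclusion were attainable you would have proved that $A^{\infty}$-causality already implies compact stable causality, and the paper exhibits an $A^{\infty}$-causal, non--compactly stably causal counterexample; so any argument terminating in a finite cyclic chain of $A^+$-related events must fail. This is exactly why the hypothesis is $\overline{A^{\infty}}$-causality and not $A^{\infty}$-causality.

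The missing idea is the compactness/closure step that replaces your finite regluing. From the infinite chain one has $(p^a,p^b)\in A^{+\infty}$ for all $a\le b$; since $\bar B\times\bar B$ is compact, a subsequence of consecutive pairs converges, $(p^{k_s},p^{k_s+1})\to(x,z)$, with $x\ne z$ because each pair is the limit of endpoints of $g$-causal segments not all contained in a compact (here strong causality, which follows from $A$-causality, is needed — your proposal also does not address why the chain elements, or the limit pair, are distinct). Closedness of $A^+$ gives $(x,z)\in A^+$, while $(p^{k_s+1},p^{k_{s+1}})\in A^{+\infty}$ gives $(z,x)\in\overline{A^{+\infty}}$ in the limit $s\to+\infty$; together these violate antisymmetry of $\overline{A^{+\infty}}$. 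Without this passage to the closure of $A^{+\infty}$ over an infinite chain, the argument does not close.
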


\begin{proof}
 Suppose
$(M,g)$ is $\overline{A^{\infty}}$-causal but non-compactly stably
causal, then there is a relatively compact open set  $B$ such that
for every $ g'\ge g$, $g'>g$ on $B$, $g'=g$ on $B^C$, $(M,g')$ is
not causal. Let $g_n$ be a sequence of metrics $g_n\ge g$, $g_n
>g$ on $B$, $g_n=g$ on $B^C$, $g_{n+1}\le g_n$, and $g_n \to g$ pointwisely on the appropriate
tensor bundle. For every choice of $n$, $(M,g_n)$ is not causal, and
since $(M,g)$ is causal there must be a closed $g_n$-causal curve
$\gamma_n$ intersecting $B$ (see figure \ref{proof}). Let $p^0_n\in
\gamma_n\cap B$ and parametrize the curves with respect to a
complete Riemannian metric $h$ so that $p^0_n=\gamma_n(0)$.

Assume an infinite number of $\gamma_n$ is entirely contained in
$\bar{B}$. Beem \cite{beem76} has shown that there would be a
inextendible $g$-causal limit curve contained in $\bar{B}$ in
contradiction with the non-total imprisoning property of the
spacetime (recall that $A$-causality implies distinction which
implies the non-total imprisoning property). Thus without loss of
generality we can assume that none of the $\gamma_n$ is entirely
contained in $\bar{B}$. We conclude that $\gamma_n$ intersects
$\dot{B}$ at least once to enter $B^{C}$. Without loss of generality
we can also assume that $p^0_n \to p^0 \in \bar{B}$.

Using again the limit curve argument, through $p^0$ there passes a
future inextendible (hence its $h$-length parameter has domain
$(-\infty,+\infty)$) $g$-causal curve $\gamma^0$ which can't pass
through $p^0$ twice as it would imply a violation of causality for
$(M,g)$. In particular since $(M,g)$ is non-partial imprisoning it
escapes $\bar{B}$ at a last point $q^0 \in \dot{B}$ to never reenter
$\bar{B}$. Let $\gamma^0_n$ be a subsequence of $\gamma_n$ which
converges to $\gamma^0$ uniformly on compact subsets and let $s^0$
be the value of the parameter such that $q^0=\gamma^0(s^0)$. Since
$\gamma^0_n(s^0+2) \to \gamma^0(s^0+2) \notin \bar{B}$ pass to a
subsequence denoted in the same way so that $\gamma^0_n(s^0+2)
\notin \bar{B}$. Let $(\bar{s}^0_n,t^1_n)\ni s^0+2 $ be the largest
open connected interval so that
$\gamma^0_n((\bar{s}^0_n,t^1_n))\subset (\bar{B})^C$. Define
$\bar{q}^0_n, p^1_n\in \dot{B}$ as
$\bar{q}^0_n=\gamma^0_n(\bar{s}^0_n)$ and $p^1_n=\gamma^0_n(t^1_n)$.
Let $p^1 \in \dot{B}$ be an accumulation point for $p^1_n$, without
loss of generality we can assume $p^1_n \to p^1$. Note that the
segment $\gamma^0_n\vert_{[\bar{s}^0_n,t_1^n]}$ is entirely
contained in $B^C$ and hence it is $g$-causal. Since $\bar{s}^0_n
\in [0,s^0+2]$, without loss of generality we can assume
$\bar{s}^0_n \to \bar{s}^0$ for some $\bar{s}^0$. Now, $\bar{s}^0\le
s^0$ indeed if $\bar{s}^0 >s^0$ then $\bar{q}^0_n\in \bar{B}$
converges to $\gamma^0(\bar{s}^0)$ a point that does not belong to
$\bar{B}$ which is impossible. In particular, it is possible to find
a sequence $s^0_n$, $ \bar{s}^0_n < s^0_n <s^0+2$, such that
$s^0_n\to s^0$. Then $q^0_n=\gamma^0_n(s^0_n) \notin \bar{B}$
converges to $q^0$ and the $g$-causal sequence of curves
$\gamma^0_n\vert_{[s^0_n,t_1^n]}$ has endpoints $(q^0_n,p^1_n) \in
J^{+}$ such that $(q^0_n,p^1_n) \to (q^0,p^1)$, i.e. $(q^0,p^1)\in
A^{+}$. Note that $(p^0,q^0)\in J^{+}$ as both points belong to
$\gamma^0$, hence $(p^0,p^1)\in A^{+}$.

\begin{figure}[ht]
\centering \psfrag{a}{{\scriptsize $\, \bar{p}^0_n$}}
\psfrag{b}{{\scriptsize $\!p^0\!=\!\gamma^0(0)$}}
\psfrag{c}{{\scriptsize$\!\!\!\!\!\!\!\gamma^0_n(0)\!=\!p^0_n$}}
\psfrag{d}{{\scriptsize$\!\!\gamma^0_n$}}
\psfrag{e}{{\scriptsize$\gamma^0$}}
\psfrag{f}{{\scriptsize$q^0\!=\!\gamma^0(s_0)$}}
\psfrag{g}{{\scriptsize $ \!\!\!\!\!\!
\gamma^0_n(\bar{s}^0_n\!)\!=\!\bar{q}^0_n$}}
\psfrag{h}{{\scriptsize$\!\!\!\!\!\!\!\!\gamma^0_n(s^0_n)\!=\!q^0_n$}}
\psfrag{i}{{\scriptsize $\ \, \gamma^0\!(s_0+2)$}}
\psfrag{l}{{\scriptsize $ \bar{p}^1_n$}} \psfrag{m}{{\scriptsize
$\!\gamma^1$}} \psfrag{n}{{\scriptsize
$\!\!p^1_n\!=\!\gamma^1_n\!(\!0\!)$}} \psfrag{o}{{\scriptsize
$p^1$}} \psfrag{p}{{\scriptsize $\!\!\! q^1$}}
\psfrag{q}{{\scriptsize $\bar{q}^1_n$}} \psfrag{r}{{\scriptsize
$q^1_n$ }} \psfrag{s}{{\scriptsize $p^2_n$}} \psfrag{t}{{\scriptsize
$p^2$}} \psfrag{u}{{\scriptsize \!\!\!\!\!\!\!\!\!\! $g_n$-causal}}
\psfrag{u2}{{\scriptsize \!\!\!\!\!\!\!\!\!\! closed}}
\psfrag{v}{{\scriptsize $g$-causal}} \psfrag{v2}{{\scriptsize
inextendible}}
 \psfrag{z}{{\scriptsize
\!\!\!\!\!\!\!\!\!\! $g$-causal}} \psfrag{w}{{\scriptsize $B$}}
\includegraphics[width=6cm]{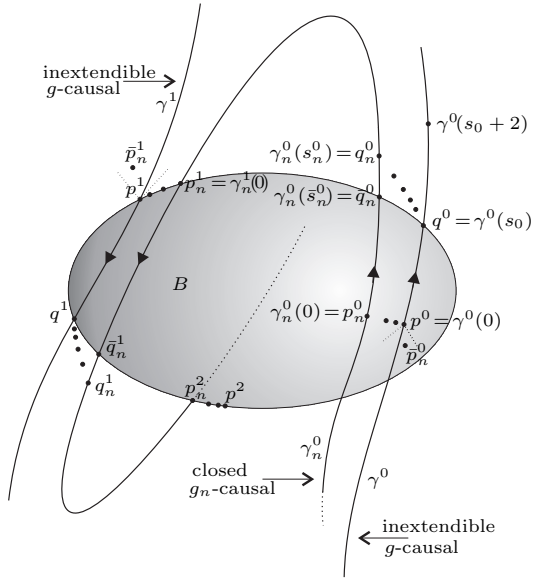}
\caption{The argument of the proof that
$\overline{A^{\infty}}$-causality implies compact stable causality.}
 \label{proof}
\end{figure}

The limit curve theorem states that $t^1_n \to +\infty$ otherwise
$p^1$ would belong to the prolongation of $\gamma^0$ which is
impossible since $q^0$ is the last point of $\gamma^0$ in $\bar{B}$.
The segments $\gamma^0_n\vert_{[\bar{s}^0_n,t_1^n]}$ are not all
contained in a compact because $\gamma^0$ escapes every compact to
never return and for every $k>0$, $\gamma_n^0(s_n^0+k) \to
\gamma^0(s^0+k)$ because $t^1_n \to +\infty$. As a consequence the
pair $(p^0,p^1)\in A^{+}$ can be regarded as the limit of the pairs
of endpoints of $g$-causal segments which are not all contained in a
compact. (In order to construct these segments take $\bar{p}^0_k \in
I^{-}(p^0)$, $\bar{p}^0_k \to p^0$ so that $q^0 \in
I^{+}(\bar{p}^0_k)$ and hence since $I^{+}$ is open $q^0_{n(k)} \in
I^{+}(\bar{p}^0_k)$ for a sufficiently large $n(k)$. Next follow the
$g$-causal segment $\gamma^0_n\vert_{[s^0_n,t_1^n]}$ which are not
all contained in a compact, finally redefine the parametrization of
the sequence $\bar{p}^0_k$ and pass if necessary to a subsequence so
that $(\bar{p}^0_n, q^0_n)\in I^{+}$ and hence $(\bar{p}^0_n,p^1_n)
\in J^{+}$ with $(\bar{p}^0_n,p^1_n) \to (p^0,p^1)$.) In particular,
$p^0\ne p^1$ since the spacetime is strongly causal.

Now, translate all the parametrizations of $\gamma^0_n$ so that
$t^1_n$ gets replaced by $0$. Repeat the previous steps where now
$p^1$ plays the role of $p^0$ and the found sequence $\gamma^1_n$ is
a reparametrized subsequence of $\gamma^0_n$.

Continue in this way, defining for at each step analogous
subsequences and events so that $p^k \in \bar{B}$, $(p^{k},p^{k+1})
\in A^{+}$, $p^{k}\ne p^{k+1}$, and for each $k$ there is a sequence
of $g$-causal curves, not all contained in a compact, so that the
endpoints of the sequence converge to $(p^{k},p^{k+1})$. Note that
for every pair of positive integers $a\le b$, $(p^a,p^b) \in
A^{+\infty}$.

Since $\bar{B} \times \bar{B}$ is compact, there is a subsequence
denoted $(p^{k_s},p^{k_s+1})$ such that $(p^{k_s},p^{k_s+1}) \to
(x,z)$ as $s \to +\infty$. Moreover, $x\ne z$ because otherwise for
every relatively compact causally convex neighborhood
 $U \ni x$, for sufficiently large $s$,
$(p^{k_s},p^{k_s+1}) \in U$, and the sequence of $g$-causal curves
not all contained in a compact, whose endpoints converge to
$(p^{k_s},p^{k_s+1})$ would contradict the causal convexity of $U$.
Since $A^{+}$ is closed, $(x,z) \in A^{+}$ and $x \ne z$. Since
$p^{k_s}$ is a subsequence of $p^k$, for every $s$, $k_{s}+1 \le
k_{s+1}$, thus  $(p^{k_{s}+1},p^{k_{s+1}}) \in A^{+\infty}$ and in
the limit $s \to +\infty$, $(z,x) \in \overline{A^{+\infty}}$. As a
consequence $(M,g)$ is not $\overline{A^{\infty}}$-causal which is
the searched contradiction.

\end{proof}

\begin{theorem}
Compact stable causality implies $A^{\infty}$-causality.
\end{theorem}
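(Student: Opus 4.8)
The plan is to argue by contraposition: assuming $(M,g)$ is not $A^{\infty}$-causal I shall exhibit a single relatively compact open set $B$ such that \emph{every} admissible widening $g_{B}$ carries a closed causal curve, so that compact stable causality fails. Failure of $A^{\infty}$-causality means that $A^{+\infty}$ is not antisymmetric, hence there is a cyclic chain of events $x_{0},x_{1},\dots ,x_{m-1},x_{m}=x_{0}$ with $(x_{i},x_{i+1})\in A^{+}=\bar{J}^{+}$; choosing the chain of minimal length makes the $x_{i}$ pairwise distinct. I would first dispose of the imprisoning case: arguing exactly as in the first theorem of this section, if $(M,g)$ imprisons a future-inextendible causal curve in a compact $K$ then it contains a lightlike line $\eta\subset\Omega_{f}(\eta)$, and the recurrence of $\eta$ together with the openness of the widened chronological relation produces a closed $g_{B}$-timelike curve inside any $B\supset K$ for \emph{every} $g_{B}>g$ on $B$. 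Thus I may assume $(M,g)$ is non-total imprisoning, which is precisely what makes the limit curve theorem applicable along the chain.

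The mechanism to keep in mind is that any $g_{B}>g$ on $B$ turns every $g$-causal curve lying in $B$ into a $g_{B}$-timelike one, since a $g$-causal vector $v$ satisfies $g_{B}(v,v)<g(v,v)\le 0$. Hence, were the cycle realised by finitely many genuine $g$-causal arcs contained in a common relatively compact $B$, with the terminal event of each arc glued to the initial event of the next, the concatenation would be a closed $g_{B}$-timelike curve for every admissible $g_{B}$, and we would be done. The difficulty is that $A^{+}=\bar{J}^{+}$ only bridges the junctions near each $x_{i+1}$ \emph{approximately}: writing $(x_{i},x_{i+1})=\lim_{n}(a^{i}_{n},b^{i}_{n})$ with $a^{i}_{n}\le_{g}b^{i}_{n}$ along $g$-causal curves $\gamma^{i}_{n}$, one has $a^{i}_{n}\ll_{g_{B}}b^{i}_{n}$ for large $n$, but the terminal points $b^{i}_{n}$ and the initial points $a^{i+1}_{n}$ only \emph{accumulate} at $x_{i+1}$, from a priori uncontrolled causal directions, so they need not be $g_{B}$-related for small widenings.

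Applying the limit curve theorem in the fixed metric $g_{B}$ to each family $\{\gamma^{i}_{n}\}$ splits the analysis. If every link admits a limit curve joining $x_{i}$ to $x_{i+1}$, these are honest $g$-causal segments whose concatenation is a closed $g$-causal curve; then $(M,g)$ is not even causal and hence not compactly stably causal. The remaining, and I expect principal, obstacle is the case in which the arcs of some link leave every compact set, so that the limit curve issuing from $x_{i}$ is future-inextendible and the widening---being supported on the relatively compact $B$---cannot by itself bridge the portion of the relation that has run off to infinity. Here I would import the iteration of the preceding theorem: the non-imprisoned future-inextendible limit curve exits and never re-enters $\bar{B}$, and recording its successive last exits from $\dot{B}$ generates new pairs in $J^{+}$ whose limits lie in the compact $\bar{B}$; since $A^{+}$ is closed and consecutive such pairs compose into $A^{+\infty}$, a convergent subsequence in $\bar{B}\times\bar{B}$---kept off the diagonal by the causal convexity of small neighbourhoods---yields a localised cyclic $A^{+}$-relation whose links no longer escape $\bar{B}$. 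To this localised cycle the connecting alternative applies, delivering a closed $g_{B}$-causal curve for every admissible $g_{B}$; thus $B$ witnesses the failure of compact stable causality, completing the contrapositive. The technical heart, exactly as in the previous proof, is controlling this escaping case so that the extracted limit cycle is genuinely realisable rather than merely an $\overline{A^{+\infty}}$-relation.
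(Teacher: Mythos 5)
Your reduction to a cyclic chain, your ``connecting alternative'' (all limit curves join their endpoints $\Rightarrow$ closed $g$-causal curve), and the recurrent-line argument disposing of the imprisoning case (an unnecessary detour, but correct) are fine. The gap is in the escaping case, which you rightly identify as the heart of the matter but do not close. You propose to import the exit-tracking iteration from the proof that $\overline{A^{\infty}}$-causality implies compact stable causality; but that iteration, by construction, only delivers a pair $(x,z)$ with $(x,z)\in A^{+}$ and $(z,x)\in\overline{A^{+\infty}}$ --- a violation of $\overline{A^{\infty}}$-causality, not a ``localised cyclic $A^{+}$-relation'' realisable inside $\bar{B}$. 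You concede as much in your final sentence (``\ldots rather than merely an $\overline{A^{+\infty}}$-relation''), and no mechanism is offered for converting that closure relation into something a widening supported on $B$ can turn into a closed causal curve. As written, the contrapositive does not terminate.

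The escaping case in fact needs no iteration at all, and this is the idea you are missing. For a link $(y,z)\in A^{+}$ with $y,z\in B$, the limit curve theorem in the non-connecting case gives \emph{simultaneously} a future-inextendible $g$-causal limit curve $\sigma^{y}$ issuing from $y$ and a past-inextendible $g$-causal limit curve $\sigma^{z}$ terminating at $z$, with $(y',z')\in A^{+}$ for every $y'\in\sigma^{y}\setminus\{y\}$ and $z'\in\sigma^{z}\setminus\{z\}$. Since $y,z\in B$ and $B$ is open, short initial and terminal segments of these curves lie in $B$, where $g_{B}>g$ makes them $g_{B}$-timelike; hence $(y,y')\in I^{+}_{(M,g_{B})}$ and $(z',z)\in I^{+}_{(M,g_{B})}$ for suitable such $y',z'$. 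Combining this with $(y',z')\in A^{+}\subset\overline{J^{+}_{(M,g_{B})}}$ and the openness of $I^{+}_{(M,g_{B})}$ yields $(y,z)\in I^{+}_{(M,g_{B})}$ outright: one never has to control, localise, or even look at the portion of the approximating curves that runs off to infinity. This establishes $A^{+}\cap(B\times B)\subset J^{+}_{(M,g_{B})}$ for every admissible $g_{B}$, so the cyclic chain concatenates into a closed $g_{B}$-causal curve and compact stable causality fails, exactly as your contrapositive requires.
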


\begin{proof}
Assume the spacetime is compactly stably causal, and suppose it is
not $A^{\infty}$-causal then there is a finite closed chain of
$A^{+}$-related events $(x_i,x_{i+1}) \in A^{+}$, $i=1,\ldots, n$,
$x_{n+1}=x_1$.

Consider a relatively compact open set $B$ which contains all $x_i$,
$i=1,\ldots, n$,  and let $g_B\ge g$, $g_B> g$ on $B$, $g_B=g$ on
$B^C$. We want to prove that $A^{+}\cap(B\times B) \subset
J^{+}_{(M,g_B)}$, from which it follows that $(M,g_B)$ is not causal
whatever the choice of $g_B$ and hence $(M,g)$ is not compactly
stably causal, the searched contradiction. Let $(y,z) \in A^{+}$,
$y,z \in B$, then by the limit curve theorem either $(y,z) \in J^{+}
\subset J^{+}_{(M,g_B)}$ or there are a future inextendible
$g$-causal curve $\sigma^y$ starting from $y$, and a past
inextendible $g$-causal curve $\sigma^z$ ending at $z$ such that for
every $y' \in \sigma^y\backslash\{y\}$ and $z' \in
\sigma^z\backslash\{z\}$, $(y',z') \in A^{+}$. At least a segment of
$\sigma^y$ near $y$ is timelike for $(M,g_B)$ and analogously for
$\sigma^z$, thus $(y,y') \in  I^{+}_{(M,g_B)}$, and $(z',z) \in
I^{+}_{(M,g_B)}$ finally since $(y',z') \in A^{+} \subset
\overline{J^{+}_{(M,g_B)}}$, it is $(y,z) \in I^{+}_{(M,g_B)}$.

\end{proof}

\begin{remark} All the properties of the previous theorems differ.
In \cite{minguzzi07b} I gave an example of non-$K$-causal
$A^{\infty}$-causal spacetime. A closer inspection proves that it is
actually non-$\overline{A^{\infty}}$-causal but compactly stably
causal. Moreover, it is possible to construct an example, similar to
that of \cite{minguzzi07b}  which is $\overline{A^{\infty}}$-causal
but non-$K$-causal (simply repeat the figure of \cite{minguzzi07b}
three times vertically, and then identify the holes cyclically). The
properties $A^{\infty}$-causality and compact stable causality
differ because of the spacetime example of figure \ref{noncomp}. A
consequence of these examples is the perhaps surprising fact that
compact stable causality differs from stable causality (see again
the example of \cite{minguzzi07b}). This fact means that the
behavior of the light cones near infinity  is important in order to
determine if a spacetime is properly compactly stably causal or not.
\end{remark}


\begin{figure}[ht]
\centering \psfrag{K}{{\footnotesize $K$}} \psfrag{x}{{\footnotesize
$x$}} \psfrag{ay}{{\footnotesize $y$}}
\includegraphics[width=12cm]{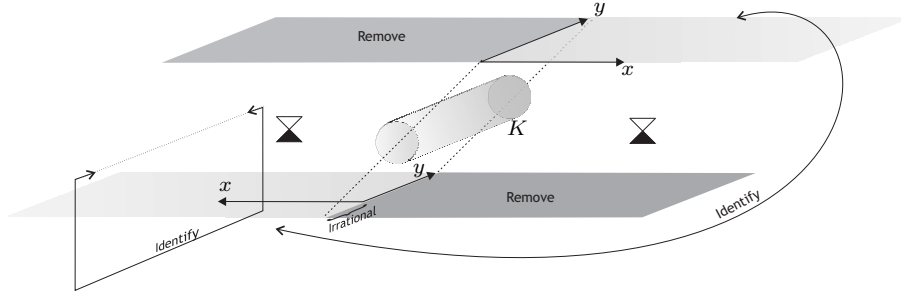}
\caption{A $A^{\infty}$-causal but non-compactly stably causal
spacetime. In order to construct the spacetime start from
$\mathbb{R} \times S^1 \times \mathbb{R}$ of coordinates
$(t,\theta,z)$, $\theta \in[0,1]$, and metric $g=-\dd t^2+\dd
\theta^2+\dd z^2$, remove two spacelike surfaces and identify, after
a translation by an irrational number,
 two spacelike surfaces as done in the figure . The coordinates
$(x,y)$ have been introduced on the identified surfaces so as to
make the identification clear. The spacetime is non-orientable but
this feature is not essential. The spacetime is non-compactly stably
causal since any enlargement of the metric on $K$ gives closed
causal curves. Thanks to the translation by an irrational number
there cannot be closed chains of $A^{+}$ related events.}
\label{noncomp}
\end{figure}

\section{The proof and some physical considerations} \label{boj}
I start with a result due to Hawking \cite{hawking70} \cite[Prop.
6.4.6]{hawking73} (he proved it with the stronger but inessential
assumption that every lightlike geodesic admits a pair of conjugate
points)

\begin{lemma} \label{vfd}
A chronological spacetime without lightlike lines is strongly
causal.
\end{lemma}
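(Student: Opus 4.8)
The plan is to argue by contraposition: assuming strong causality fails at some point $p$, I will construct a lightlike line, contradicting the hypothesis. The failure of strong causality at $p$ furnishes a relatively compact open neighborhood $U\ni p$ and, for a neighborhood basis $V_n\downarrow\{p\}$ with $V_n\subset U$, future-directed causal curves $\gamma_n$ with endpoints $x_n,z_n\in V_n$ (so $x_n\to p$ and $z_n\to p$) which are not contained in $U$. Parametrizing each $\gamma_n\colon[0,a_n]\to M$ by $h$-arclength for a complete Riemannian metric $h$, so that $\gamma_n(0)=x_n$ and $\gamma_n(a_n)=z_n$, the fact that the curves leave the fixed compact $\bar{U}$ while their endpoints approach $p\in U$ forces $a_n\ge\epsilon>0$. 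These are the \emph{almost closed} escaping curves on which the whole argument rests.

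First I would apply the limit curve theorem and split into two cases according to whether the lengths $a_n$ stay bounded. If (a subsequence of) the $a_n$ converges to a finite value, the limit curve is a genuine \emph{closed} causal curve through $p$; chronology then forces it to be achronal, since a pair of chronologically related points on it would, upon going once around the loop, yield $p\ll p$, and traversing the loop periodically produces an achronal inextendible causal curve, i.e. a lightlike line. If instead $a_n\to+\infty$, the theorem yields a future-inextendible causal limit curve $\gamma^{+}$ issuing from $p$, and, running the curves from the $z_n$-end, a past-inextendible causal limit curve $\gamma^{-}$ arriving at $p$ along a common subsequence; concatenating them gives an inextendible causal curve $\gamma$ through $p$.

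The decisive step, and the one I expect to be the main obstacle, is to prove that $\gamma$ is achronal, for then it is by definition a lightlike line and we are done. The difficulty is that chronological relations only pass to causal relations in the limit, so a timelike shortcut on $\gamma$ does not obviously close up into a forbidden closed timelike curve. The resolution I would use exploits that $p$ is realized simultaneously as $\lim x_n=\lim\gamma_n(0)$ and as $\lim z_n=\lim\gamma_n(a_n)$, at the two \emph{opposite} ends of the parametrization. Indeed, suppose $\gamma(u_1)\ll\gamma(u_2)$ with $u_1<u_2$; prolonging along $\gamma$ one gets $p\ll\gamma(u_2)$ (or, in the spanning case, a relation from the $\gamma^{-}$-side to the $\gamma^{+}$-side). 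Approximating the \emph{past} point $p$ by $z_n=\gamma_n(a_n)$ and using the openness of $\ll$, this becomes $\gamma_n(a_n)\ll\gamma_n(u_2)$ for large $n$; but along $\gamma_n$ one has $u_2<a_n$, hence $\gamma_n(u_2)\le\gamma_n(a_n)$, and the chain $\gamma_n(u_2)\le\gamma_n(a_n)\ll\gamma_n(u_2)$ is a closed timelike curve, contradicting chronology. Thus $\gamma$ is achronal, hence a lightlike line, which is the desired contradiction. The only routine points left are the standard facts that limits of $g$-causal curves are $g$-causal and that an achronal causal curve is automatically a lightlike geodesic, both of which are available.
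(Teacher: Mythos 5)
Your proof is correct and follows essentially the same route as the paper: negate strong causality to obtain causal curves escaping a fixed neighbourhood with both endpoints converging to $p$, apply the limit curve theorem from both ends, and use chronology to force achronality of the resulting closed or concatenated inextendible causal curve, producing a lightlike line. The paper simply delegates all of this to case (2) of the cited limit curve theorem, whereas you unpack it; the only cosmetic gap is that your achronality argument spells out the cases $0\le u_1$ and $u_1<0\le u_2$ but not $u_1<u_2<0$, which is handled by the symmetric argument using the $x_n$-end in place of the $z_n$-end.
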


\begin{proof} Recall that in a strongly causal spacetime, given any
neighborhood $U$ of $x\in M$ there exist a neighborhood $V \subset
U$, $x \in V$, such that  any future-directed causal curve with
endpoints at $V$ is entirely contained in $U$ (see for instance
\cite[Lemma 3.22]{minguzzi06c}). Thus if $(M,g)$ were not strongly
causal there would be a point $x$, a neighborhood $U\ni x$, and a
sequence of causal curves $\gamma_n$ of starting event $x_n$, ending
event $z_n$ such that $x_n \to x$, $z_n \to x$, and the curves
$\gamma_n$ are not entirely contained in $U$. Hence there are the
conditions required by the limit curve theorem \cite[theorem
3.1]{minguzzi07c} case (2) which implies the existence of a
lightlike line passing through $x$, a contradiction. \end{proof}

%

A fundamental step in the proof is
\begin{theorem}
If a spacetime does not have lightlike lines then the relation
$A^{+}=\bar{J}^{+}$ is transitive, that is $K^{+}=A^{+}$. Moreover,
if the spacetime is also chronological then the spacetime is
$K$-causal.
\end{theorem}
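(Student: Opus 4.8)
The plan is to prove the two assertions separately, establishing first the transitivity of $A^{+}$ and then deducing $K$-causality under the additional chronology hypothesis.

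For transitivity, I would argue by contradiction. Suppose $(x,y) \in A^{+}$ and $(y,z) \in A^{+}$ but $(x,z) \notin A^{+}$. The idea is to use the limit curve theorem to extract causal structure from the two given $A^{+}$ pairs. Take sequences of causal curves $\sigma_n$ with endpoints converging to $(x,y)$ and $\tau_n$ with endpoints converging to $(y,z)$. If either pair could be upgraded to a genuine causal curve (i.e.\ $(x,y) \in J^{+}$ or $(y,z) \in J^{+}$), then composing with the other relation and invoking the composition lemma $J^{+}\circ A^{+}\subset A^{+}$ and $A^{+}\circ J^{+}\subset A^{+}$ would immediately give $(x,z) \in A^{+}$, a contradiction. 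So the obstruction occurs precisely when \emph{neither} pair is connected by an actual causal curve through $y$. By the limit curve theorem, the failure of $(x,y)$ to lie in $J^{+}$ forces the existence of a future-inextendible causal curve $\sigma^{x}$ issuing from $x$ and a past-inextendible causal curve $\sigma^{y}$ ending at $y$, with all their interior points $A^{+}$-related; similarly a future-inextendible $\tau^{y}$ from $y$ and a past-inextendible $\tau^{z}$ into $z$. The key move is then to concatenate $\sigma^{y}$ (ending at $y$) with $\tau^{y}$ (starting at $y$) to produce an inextendible causal curve $\lambda$ through $y$.

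The heart of the argument, and what I expect to be the main obstacle, is showing that this inextendible curve $\lambda$ is actually a \emph{lightlike line}, i.e.\ that it is achronal, thereby contradicting the hypothesis of no lightlike lines. The delicate point is controlling chronality across the junction at $y$: one must rule out that some point of $\sigma^{y}$ chronologically precedes some point of $\tau^{y}$, which would both break achronality of $\lambda$ and potentially let one shortcut back to obtain $(x,z)\in A^{+}$. I would exploit that every interior pair along $\sigma^{x},\sigma^{y}$ lies in $A^{+}$ (and likewise for the $\tau$ family), so that if $\lambda$ admitted a timelike shortcut $p \ll q$ with $p,q \in \lambda$, openness of $I^{+}$ together with the approximating sequences would propagate this to produce $(x,z) \in I^{+} \subset A^{+}$, the contradiction. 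Pinning down exactly which points to push forward and backward through the open chronology relation, and verifying the inextendibility of the concatenation at both ends, is the technical crux.

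Once $K^{+}=A^{+}$ is established, the second statement is nearly immediate. Since $A^{+}=\bar{J}^{+}$ is by construction closed, and I have just shown it is transitive, it is a closed transitive relation containing $J^{+}$; as $K^{+}$ is the \emph{smallest} such relation, $K^{+}\subset A^{+}$, and the reverse inclusion holds since $J^{+}\subset K^{+}$ gives $A^{+}=\bar{J}^{+}\subset \overline{K^{+}}=K^{+}$. For the final clause I must show that $K$-causality, i.e.\ antisymmetry of $K^{+}=A^{+}$, follows from chronology plus absence of lightlike lines. Suppose $(x,y)\in A^{+}$ and $(y,x)\in A^{+}$ with $x\neq y$; by transitivity $(x,x)\in A^{+}$. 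Invoking Lemma~\ref{vfd}, the spacetime is strongly causal, hence in particular distinguishing, and a nontrivial $A^{+}$-cycle through a strongly causal point can be shown to force an almost closed causal curve and thence, via the limit curve theorem, either a violation of chronology or a lightlike line. Ruling out $x=y$ as well (a fixed point $(x,x)\in A^{+}$ with $x$ not on a closed causal curve likewise contradicts strong causality) completes the antisymmetry and hence $K$-causality.
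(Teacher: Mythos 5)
Your transitivity argument is essentially the paper's: split on whether the approximating sequences admit limit curves joining the endpoints, dispose of the connecting cases with the composition lemma $J^{+}\circ A^{+}\subset A^{+}$ (and its mirror), and otherwise concatenate the past-inextendible limit curve ending at $y$ with the future-inextendible one starting at $y$; the concatenation cannot be a lightlike line, so it carries a chronal pair, which openness of $I^{+}$ propagates to the approximating endpoints. One small correction: what this yields is $(x_n,z_n)\in I^{+}$ with $(x_n,z_n)\to(x,z)$, hence $(x,z)\in \bar{I}^{+}=A^{+}$, not $(x,z)\in I^{+}$; the conclusion you need is unaffected. The identification $K^{+}=A^{+}$ then follows exactly as you say.

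The gap is in the antisymmetry claim. The step ``by transitivity $(x,x)\in A^{+}$'' is vacuous: $J^{+}$ is reflexive in this framework, so $(x,x)\in J^{+}\subset A^{+}$ for every $x$; and even reading $A^{+}$ as the closure of the strict relation, $(x,x)\in\bar{J}^{+}$ holds at every point (short causal curves through any neighborhood of $x$ supply approximating pairs). Hence no contradiction with strong causality can be extracted from $(x,x)\in A^{+}$, and your parenthetical claim that such a ``fixed point'' contradicts strong causality is false --- fortunately the $x=y$ case needs no ruling out, since antisymmetry only concerns $x\ne y$. What actually has to be proved, and what you defer with ``can be shown to force an almost closed causal curve,'' is the entire content of the final assertion. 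The paper obtains it by rerunning the limit-curve case analysis at $y$ on the two sequences realizing $(x,y)\in A^{+}$ and $(y,x)\in A^{+}$: if both limit curves connect, there is a closed causal curve; if exactly one connects, one manufactures causal curves from $x'_k\to x$ to $z_{n(k)}\to x$ that must pass through a causally convex neighborhood of $y$ disjoint from one of $x$, violating strong causality at $x$; if neither connects, the concatenated inextendible curve through $y$ is not a lightlike line, so its chronal pair again yields causal curves with endpoints converging to $(x,x)$ that leave a fixed neighborhood of $x$, violating strong causality at $x$ (here one also uses, via the lemma that chronology plus absence of lightlike lines gives strong causality, that the curve is not imprisoned). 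Without this case analysis, or an equivalent argument, the proof of $K$-causality is incomplete.
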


\begin{proof}

Let us prove the transitivity of $A^{+}$. Take two pairs $(x,y) \in
A^{+}$ and $(y,z) \in A^{+}$ and two sequences of causal curves
$\sigma_n$ of endpoints $(x_n, y_n) \to (x,y)$, and $\gamma_n$ of
endpoints $(y'_n,z_n) \to (y,z)$. Apply the limit curve theorem
\cite{minguzzi07c} to both sequences, and consider first the case in
which the limit curve in both cases does not connect the limit
points. By the limit curve theorem, $\sigma_n$ has a limit curve
$\sigma$ which is a past inextendible causal curve ending at $y$.
Analogously $\gamma_n$ has a limit curve $\gamma$ which is   a
future inextendible causal curve starting from $y$. The inextendible
curve $\gamma\circ\sigma$ cannot be a lightlike line thus there are
points $x'\in \sigma\backslash\{y\}$, $z' \in \gamma\backslash\{y\}$
such that $(x',z')\in I^{+}$ and (pass to a subsequence) points
$x'_n \in \sigma_n$, $x'_n \to x'$ and $z'_n \in \gamma_n$, $z'_n
\to z'$, thus, since $I^{+}$ is open, for sufficiently large $n$,
$(x_n,z_n) \in I^{+}$ and finally $(x,z) \in \bar{I}^{+}=A^{+}$.

If both limit curves join the limit points then clearly $(x,z) \in
J^{+}\subset A^{+}$. If, say, $\sigma$ joins $x$ to $y$ but $\gamma$
does not join $y$ to $z$, take $x'_n\in I^{-}(x)$, $x'_n \to x$, so
that $x'_n\ll y$ and for large $n$, $x'_n\ll y'_n\le z_n$, thus in
the limit $(x,z) \in A^+$. The remaining case is analogous. Thus
$A^{+}$ is closed and transitive hence $A^{+}=K^{+}$.

Assume $(M,g)$ is chronological then by lemma \ref{vfd} $(M,g)$ is
strongly causal. The relation $A^{+}$ is antisymmetric indeed let
$(x,y)\in A^+$ and $(y,x)\in A^{+}$, $x\ne y$, and let $\sigma_n$ of
endpoints $(x_n,y_n)$ and $\gamma_n$ of endpoints $(y'_n, z_n)$ be
sequences of causal curves whose endpoints converge to the initial
pairs $(x_n,y_n) \to (x,y)$, $(y'_n, z_n)\to (y,x)$. Then we repeat
the argument used above, that is we apply the limit curve theorem to
the accumulation point $y$. Call $\sigma$ the limit causal curve for
$\sigma_n$ and analogously let $\gamma$ be the limit causal curve
for $\gamma_n$. If $\sigma$ connects $x$ to $y$ and $\gamma$
connects $y$ to $x$ then there is a closed causal curve on spacetime
a contradiction. Let $U\ni x$, $V\ni y$ be two disjoint causally
convex neighborhoods. If $\sigma$ connects $x$ to $y$ but $\gamma$
does not connect $y$ to $x$,  then it is possible to argue as above,
i.e. take $x'_k\in I^{-}(x)$, $x'_k \to x$, then for sufficiently
large $n$, which we can choose so that $n(k)>k$, $y'_{n(k)}\in
I^{+}(x'_k)\cap V $, from which it follows that there is a sequence
of causal curves of endpoints $x'_k$, $z_{n(k)}$, intersecting $V$.
But $(x'_k, z_{n(k)})\to (x,x)$ thus strong causality is violated at
$x$. The case in which $\gamma$ connects $y$ to $x$ is analogous.
The remaining case is that in which $\sigma$ is past-inextendible
and $\gamma$ is future-inextendible. Then $\gamma\circ \sigma$ is a
inextendible causal curve which by assumption is not a lightlike
line. Moreover, since strong causality holds, this curve is not
partially imprisoned in any compact, thus using the same argument as
above (i.e. taking advantage of the chronality of $\gamma\circ
\sigma$) it follows that there is a sequence of causal curves of
endpoints $x_n$, $z_n$ not all contained in a compact. Again there
is a contradiction with the strong casuality at $x$.
\end{proof}

Clearly, if we could prove that $K$-causality is equivalent to
stable causality then the main theorem would follow. Unfortunately,
though there is evidence for this coincidence \cite{minguzzi07} no
proof has yet been given. In fact Seifert \cite{seifert71}, even
before the introduction of $K$-causality, gave an argument which
would have implied the equivalence. Unfortunately, he only sketched
the proof and  a recent more detailed study \cite{minguzzi07} has
shown that those arguments were inconclusive. If the two causal
properties are indeed equivalent it is probable that the proof would
be rather involved because the $K^+$ relation is not as easy to
handle as the other causal relations. Fortunately, however, it is
possible to circumvent this difficulty,  and avoid a direct proof of
the equivalence between stable causality and $K$-causality, by
working on compact stable causality. Indeed, the previous result
will be used in the following weaker form

\begin{corollary}
A chronological spacetime without lightlike lines is compactly
stably causal.
\end{corollary}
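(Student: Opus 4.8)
The plan is to derive this corollary purely by concatenating three implications that are already in hand, so that the only nontrivial input is recognizing which chain applies. No fresh geometric argument should be required at this stage; the corollary is stated precisely because it packages the weakest consequence of the preceding machinery that will be needed later.

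First I would invoke the fundamental theorem established immediately above: a spacetime without lightlike lines has $A^{+}=\bar{J}^{+}$ closed and transitive, hence $K^{+}=A^{+}$, and if in addition the spacetime is chronological then it is $K$-causal. This is the step where both hypotheses, chronology and absence of lightlike lines, are actually consumed; everything that follows is formal. Next I would apply Theorem \ref{mjh}, by which $K$-causality implies $\overline{A^{\infty}}$-causality (the observation being simply that $\overline{A^{+\infty}}\subset K^{+}$, so antisymmetry of $K^{+}$ forces antisymmetry of $\overline{A^{+\infty}}$). Finally I would feed this into the theorem asserting that $\overline{A^{\infty}}$-causality implies compact stable causality. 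Chaining the three yields that a chronological spacetime without lightlike lines is compactly stably causal.

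The main obstacle is not located in the corollary itself but upstream, in the fundamental theorem: there the limit curve theorem and the no-lightlike-lines hypothesis do the real work, first upgrading $\bar{J}^{+}$ to a transitive (and therefore $K^{+}$) relation, and then, using strong causality from Lemma \ref{vfd} together with the chronality of the inextendible limit curve $\gamma\circ\sigma$, excluding the closed-causal-curve and partial-imprisonment configurations that would otherwise destroy the antisymmetry of $A^{+}$. Once those configurations are ruled out, $K$-causality is immediate and the rest is bookkeeping.

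The reason for recording only this weaker statement, compact stable causality rather than the full $K$-causality just obtained, is strategic: the sought main theorem will be reached through compact stable causality, which lets one work with the more tractable metric-widening definition and sidesteps the still-open question of whether $K$-causality coincides with stable causality. Thus I expect the proof to be a one-line concatenation of references, with the genuine difficulty already discharged.
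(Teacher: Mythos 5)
Your chain is exactly the paper's intended derivation: the theorem immediately preceding the corollary gives $K$-causality from chronology plus the absence of lightlike lines, and Theorem \ref{mjh} together with the implication $\overline{A^{\infty}}$-causality $\Rightarrow$ compact stable causality (established in Section \ref{vgq}) completes the descent. The paper leaves this concatenation implicit, and your reading of why only the weaker conclusion is recorded matches the author's stated motivation.
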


Now, the idea is to consider the property ``$(M,g)$ is compactly
stably causal and does not admit lightlike lines'' to show that it
is {\em inductive} (see lemma \ref{kgt}), that is, invariant under
enlargement of the light cones over compact sets. Then it is
possible to enlarge the light cones in a sequence of compact sets
that cover $M$ so as to obtain a causal spacetime with strictly
larger light cones (theorem \ref{vge}).

\begin{lemma}
On $(M,g)$ let $B$ be a relatively compact open set, let $g_n$ be a
sequence of metrics $g_n\ge g$, $g_n
>g$ on $B$, $g_n=g$ on $B^C$, $g_{n+1}\le g_n$, and $g_n \to g$ pointwisely on the appropriate
tensor bundle. If $(M,g)$ does not have lightlike lines then  all
but a finite number of $(M,g_n)$  do not have lightlike lines.
\end{lemma}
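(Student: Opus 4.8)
The statement is a limit-curve stability result: if the metrics $g_n$ decrease to $g$ and $(M,g)$ has no lightlike lines, then eventually no $(M,g_n)$ does either. The natural strategy is to argue by contradiction. I would assume that infinitely many $(M,g_n)$ admit a lightlike line; after passing to a subsequence, call these $g_n$-lightlike lines $\eta_n$, which are $g_n$-achronal inextendible causal curves (hence $g_n$-lightlike geodesics). The plan is to extract a $g$-causal limit curve $\eta$ from the $\eta_n$ and to show that $\eta$ is a $g$-lightlike line, contradicting the hypothesis on $(M,g)$.

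\medskip

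\noindent\textbf{Key steps.} First I would fix a base point: since $B$ is relatively compact and the cones are enlarged only on $B$, I expect that each $\eta_n$ must meet $\bar B$ (otherwise $\eta_n$ would already be $g$-causal and $g$-achronal on $B^C$, but this needs care — I address it below). Parametrizing the $\eta_n$ by $h$-arclength with respect to a complete auxiliary Riemannian metric $h$ and pinning $\eta_n(0)\in\bar B$, I would apply the version of the limit curve theorem that allows the metrics in the sequence to vary (the formulation in \cite{minguzzi07c} cited in the introduction). Because $g_n\to g$ pointwise and $g_n\ge g$, the limit of $g_n$-causal curves is a $g$-causal curve; inextendibility of the $\eta_n$ together with $\eta_n(0)\to\eta(0)\in\bar B$ should yield an inextendible $g$-causal limit curve $\eta$ through $\eta(0)$. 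The crucial second step is to upgrade $g$-causality of $\eta$ to $g$-\emph{achronality}. Here I would use that $g_n$-achronality of $\eta_n$ means no pair of its points is $g_n$-timelike related; since $I^+_{g}\subset I^+_{g_n}$ and the relations $I^+_{g_n}$ shrink down to $I^+_g$, any $g$-timelike pair $(p,q)$ on $\eta$ would be $g$-timelike, hence $g_n$-timelike for all $n$, and by openness of $I^+_g$ it would pull back to a $g_n$-timelike pair of points on $\eta_n$ for large $n$ — contradicting $g_n$-achronality.

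\medskip

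\noindent\textbf{Main obstacle.} The delicate point, and where I expect the real work to be, is the interaction between the limit-curve extraction and the location of the chord. The achronality violation must be detected by a pair of points $(p,q)$ on $\eta$ with $(p,q)\in I^+_g$, and one must transfer this back to nearby points on $\eta_n$; this transfer is clean when $p,q$ lie in the region where the limit curve is approximated uniformly, but $\eta$ may run off to infinity where the $\eta_n$ need not track it. So the argument must be arranged so that the witnessing timelike chord occurs over a compact portion of $\eta$ on which uniform convergence on compact subsets is available. A subtler issue is that the enlargement occurs only on $B$: a $g_n$-line need not be a $g$-line even away from $B$, because its \emph{inextendibility} and \emph{achronality} are global conditions. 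I would handle this by noting that on $B^C$ the metrics agree, so $g_n$-achronality restricted to $B^C$ is $g$-achronality, and the only place where a $g$-timelike chord of $\eta$ could hide is one straddling $\bar B$; but such a chord, being $g$-timelike, is a fortiori $g_n$-timelike, so it still contradicts the $g_n$-achronality of $\eta_n$. Finally, inextendibility of the limit $\eta$ follows from the non-total-imprisonment that is guaranteed because $(M,g)$, having no lightlike lines and (in the intended application) being chronological, is strongly causal by Lemma \ref{vfd}; this prevents the limit curve from being trapped in a compact set and collapsing to something extendible.
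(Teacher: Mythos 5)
Your proposal is correct and follows essentially the same route as the paper: argue by contradiction, show each $g_n$-line must meet $B$ (since outside $B$ the metrics coincide and $I^+_g\subset I^+_{g_n}$ would make such a line a $g$-line), extract an inextendible $g$-causal limit curve through an accumulation point in $\bar B$, and rule out a $g$-timelike chord by pulling it back to nearby points of the $\eta_n$ via openness of $I^+_g\subset I^+_{g_n}$. The only caveat is your final appeal to chronology for inextendibility of the limit: chronology is not a hypothesis of the lemma and is not needed, since the limit curve theorem with $h$-arclength parametrization ($h$ complete) already yields an inextendible limit curve.
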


\begin{proof}
If not we can, passing to a subsequence, assume that all $(M,g_n)$
have lightlike lines. Denote $\gamma_n$ a respective sequence of
lightlike lines and assume there is one, say $\gamma_{\bar{n}}$,
which does not intersect $B$. Since $g_{\bar{n}}$ and $g$ coincide
outside $B$, $\gamma_{\bar{n}}$ is a $g$-causal curve. Also it is
$g$-achronal because if there are two points $p,q \in
\gamma_{\bar{n}}$ such that $(p,q) \in I^{+}_{g}$ then as $g\le
g_{\bar{n}}$, $(p,q) \in I^{+}_{g_n}$ which is impossible because
$\gamma_{\bar{n}}$ is a lightlike line on $(M,\bar{g}_n)$. But
$\gamma_{\bar{n}}$ cannot be $g$-achronal as it would be a lightlike
line of $(M,g)$, thus the overall contradiction proves that all
$\gamma_n$ intersect $B$. Without loss of generality we can assume
(pass to a subsequence if necessary) that there are $x_n \in B\cap
\gamma_n$, and $x \in \bar{B}$ such that $x_n \to x$. By the limit
curve theorem there is a inextendible $g$-causal curve $\eta$
passing through $x$. If $\eta$ is not $g$-achronal there are $y,z
\in \eta$ such that $(y,z) \in I^{+}_{g}\subset I^{+}_{g_n}$ for
every $n$.  But since $y$ and $z$ are limit points of the sequence
$\gamma_n$ and $I^{+}_g (\subset I^{+}_{g_n})$ is open some of the
curves $\gamma_n$ are not lightlike lines. The contradiction proves
that $\eta$ is not only $g$-causal but also $g$-achronal thus it is
a lightlike line. Again this is impossible thus the assumption that
an infinite number of $(M,g_n)$ does admit lightlike lines has lead
to a contradiction.

\end{proof}

\begin{lemma} \label{kgt}
If $(M,g)$ is compactly stably causal and without lightlike lines
then for every open set of compact closure $B$ it is possible to
find a metric $g_B\ge g$ such that $g_B>g$ on $B$, $g_B=g$ outside
$B$, and $(M,g_B)$ is compactly stably causal and without lightlike
lines.
\end{lemma}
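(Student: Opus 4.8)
The plan is to produce a single metric $g_B$ that is simultaneously \emph{chronological} and \emph{free of lightlike lines}, for then the corollary above (a chronological spacetime without lightlike lines is compactly stably causal) immediately yields that $(M,g_B)$ is compactly stably causal, while absence of lightlike lines is exactly the remaining conclusion. The key observation is that these two ingredients are supplied by two different results already at hand: chronology will be inherited from the compact stable causality of $(M,g)$, and absence of lightlike lines from the preceding persistence lemma.

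First I would use the compact stable causality of $(M,g)$ on the relatively compact open set $B$ to obtain a metric $g'\ge g$ with $g'>g$ on $B$, $g'=g$ on $B^{C}$, and $(M,g')$ causal. Next I would interpolate, setting $g_n=(1-\tfrac1n)\,g+\tfrac1n\,g'$; taking convex combinations as in the remark following the definition of compact stable causality, each $g_n$ is a Lorentzian metric whose cones lie between those of $g$ and $g'$, so that $g<g_n\le g'$ on $B$, $g_n=g$ on $B^{C}$, $g_{n+1}\le g_n$, and $g_n\to g$ pointwise.

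Now the two inputs can be combined. Since $g_n\le g'$, every $g_n$-causal curve is $g'$-causal, so a closed $g_n$-causal curve would be a closed $g'$-causal curve; as $(M,g')$ is causal, each $(M,g_n)$ is causal, hence chronological. On the other hand, the sequence $g_n$ is exactly of the type treated by the preceding lemma, so from the absence of lightlike lines in $(M,g)$ it follows that all but finitely many $(M,g_n)$ have no lightlike lines. Choosing such an $n$ and setting $g_B:=g_n$, the spacetime $(M,g_B)$ is chronological and without lightlike lines; by the corollary it is compactly stably causal, it has no lightlike lines by construction, and $g_B\ge g$ with $g_B>g$ on $B$, $g_B=g$ on $B^{C}$, as required.

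The main obstacle is precisely the coordination of the two inputs on one and the same metric: one must open the cones enough to inherit causality (hence chronology) from $g'$, yet keep them close enough to those of $g$ that no new lightlike line is created. Sandwiching a sequence $g_n\downarrow g$ between $g$ and $g'$ achieves both at once, the upper bound $g'$ securing chronology for every $n$ and the convergence $g_n\to g$ letting the persistence lemma rule out lightlike lines for all large $n$.
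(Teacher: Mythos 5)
Your proposal is correct and follows essentially the same route as the paper: interpolate $g_n=(1-\tfrac1n)g+\tfrac1n g'$ between $g$ and the causal metric $g'$ supplied by compact stable causality, invoke the preceding persistence lemma to find an $n$ with $(M,g_n)$ free of lightlike lines, and note that this $g_n$ is causal (hence chronological) so the corollary gives compact stable causality. The paper's proof is word-for-word this argument.
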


\begin{proof}
Since $(M,g)$ is compactly stably causal we can find $\tilde{g}_B$
such that $\tilde{g}_B>g$ on $B$, $\tilde{g}_B=g$ outside $B$ and
$(M,\tilde{g}_B)$ is causal. Define
$g_n=(1-\frac{1}{n})g+\frac{1}{n}\tilde{g}_B$ so that $g \le g_n\le
\tilde{g}_B$ satisfies the assumptions of the previous lemma. Thus
there is a certain element of the sequence, denote it $g_B$, such
that $(M,g_B)$ does not have lightlike lines and since $g_B\le
\tilde{g}_B$, $(M,g_B)$ is causal. But every causal spacetime
without lightlike lines is compactly stably causal thus the thesis.
\end{proof}

\begin{theorem} \label{vge}
If $(M,g)$ is chronological and without lightlike lines then it is
stably causal.
\end{theorem}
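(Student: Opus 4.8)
The plan is to exploit the inductiveness recorded in Lemma \ref{kgt} so as to enlarge the light cones over the \emph{whole} of $M$ simultaneously, producing a causal metric with everywhere strictly wider cones; by definition this is exactly stable causality. First I would observe that, by the preceding corollary, $(M,g)$ is compactly stably causal and, by hypothesis, without lightlike lines. Call $P$ this joint property, and note that compact stable causality already forces $(M,g)$ to be causal (a closed $g$-causal curve would remain causal for the wider causal metric furnished by the definition). Next I would fix a countable, locally finite cover $\{B_i\}_{i\ge 1}$ of $M$ by relatively compact open sets; such a cover exists since $M$ is second countable, e.g. $B_i=\mathrm{int}\,K_{i+1}\setminus K_{i-1}$ for a compact exhaustion $K_1\subset K_2\subset\cdots$ of $M$.

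I would then build a sequence of metrics by iterating Lemma \ref{kgt}. Set $g_0=g$; given $g_{i-1}$ with $(M,g_{i-1})$ enjoying $P$, apply Lemma \ref{kgt} on $B_i$ to get $g_i\ge g_{i-1}$, with $g_i>g_{i-1}$ on $B_i$ and $g_i=g_{i-1}$ outside $B_i$, such that $(M,g_i)$ again enjoys $P$. The induction is self-sustaining precisely because $P$ is the inductive property. The decisive point is that local finiteness trivialises the passage to the limit: every $x\in M$ lies in only finitely many $B_i$, so the sequence $g_i(x)$ is eventually constant and $g_\infty(x):=\lim_i g_i(x)$ is well defined. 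Moreover each $x$ has a neighbourhood $W$ meeting only finitely many $B_i$, none with index exceeding some $N'$; on $W$ one has $g_i=g_{N'}$ for all $i\ge N'$, hence $g_\infty=g_{N'}$ there, so that $g_\infty$ is a genuine smooth Lorentzian metric. Since the cones only widen at each step and every $x$ belongs to some $B_i$, one obtains $g_\infty>g$ everywhere.

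It then remains to verify that $(M,g_\infty)$ is causal, and this is the heart of the matter: causality violation is a compact phenomenon. Indeed, a closed $g_\infty$-causal curve $\gamma$ has compact image, which therefore meets only finitely many members of the locally finite cover; letting $N$ be the largest such index, the modifications performed at stages $i>N$ do not touch $\mathrm{Im}\,\gamma$, so $g_\infty=g_N$ on $\mathrm{Im}\,\gamma$. As the causal character of $\dot\gamma$ is a pointwise condition on the metric, $\gamma$ is then a closed $g_N$-causal curve, contradicting the causality of $(M,g_N)$ guaranteed by $P$. Hence $(M,g_\infty)$ is causal with $g_\infty>g$, i.e. $(M,g)$ is stably causal.

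The main obstacle is conceptual rather than computational: one must arrange the countably many cone enlargements so that they admit a common limit which is still a smooth Lorentzian metric and still causal. Both difficulties dissolve once the enlargements are organised along a locally finite cover. Local finiteness means each point is altered only finitely often, so no delicate estimate on the sizes of the successive perturbations is needed to secure convergence; and it also guarantees that any would-be closed causal curve of the limit metric already lives entirely within a finite stage of the construction, where causality is known to hold. This is exactly the leverage that upgrades compact stable causality to genuine stable causality under the inductiveness of Lemma \ref{kgt}.
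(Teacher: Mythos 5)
Your proof is correct and rests on exactly the same pillars as the paper's: the corollary that chronology plus absence of lightlike lines gives compact stable causality, the inductive Lemma \ref{kgt}, an iteration over a countable family of relatively compact open sets covering $M$, and the observation that a closed causal curve of the final metric has compact image and is therefore already a closed causal curve of some finite stage. Where you genuinely diverge is in how the countably many enlargements are assembled into one metric. The paper enlarges the cones over the \emph{nested} balls $B_k=B(x_0,k)$, so at any fixed point the metrics keep strictly growing forever and no pointwise limit is available; it therefore builds $\tilde{g}=\sum_k \chi_k\, g_{k+2}$ with a partition of unity subordinate to the annuli $A_k=B_{k+1}\setminus\bar{B}_{k-1}$, and has to verify that this convex combination still satisfies $g<\tilde{g}\le g_{k+2}$ on $B_k$. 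You instead perform the enlargements directly along a \emph{locally finite} cover, so each point is modified only finitely often, the sequence $g_i$ stabilizes on a neighbourhood of every point, and the limit $g_\infty$ is automatically a smooth Lorentzian metric coinciding with a finite stage $g_N$ near any compact set --- no partition of unity and no convex-combination argument are needed. The concluding contradictions are parallel ($g_\infty=g_N$ on the image of the curve in your version, $\tilde{g}\le g_{s+2}$ on the ball containing it in the paper's), and both rely on compact stable causality implying causality at every stage. Your gluing is marginally cleaner; the paper's works for any exhaustion by balls without arranging local finiteness. Either way the statement follows.
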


\begin{proof}
Let $h$ be an auxiliary complete Riemannian metric, $x_0 \in M$, and
let $B_k= B(x_0,k)$ be the open balls of radius $k$ centered at
$x_0$. Define $g_1=g$. By the previous lemma it is possible to find
a metric $g_2>g_1$ on $B_2$, $g_2=g_1$ outside $B_2$, such that
$(M,g_2)$ is compactly stably causal and without lightlike lines.
Next repeat the argument for the relatively compact open set $B_3$
with respect to the spacetime $(M,g_2)$: there is a metric $g_3>g_2$
on $B_3$, $g_3=g_2(=g)$ outside $B_3$, such that $(M,g_3)$ is
compactly stably causal and without lightlike lines. Continue in
this way and find a sequence of metrics $g_{k+1}\ge g_k\ge g$,
$g_{k+1}>g_k$ on $B_{k+1}$. The open sets $A_1=B_2$,
$A_k=B_{k+1}\backslash\bar{B}_{k-1}$ for $k\ge 2$, cover $M$. Let
$\{\chi_k\}$ be a partition of unity so that the support of $\chi_k$
is contained in $A_k$, and define $\tilde{g}=\sum_{k=1}^{+\infty}
\chi_k g_{k+2}$ (the sum has at most two non vanishing terms at each
point) then $\tilde{g}>g$, moreover at $x\in B_k$, $\tilde{g}(x)\le
g_{k+2}(x)$, because for $n>k$, $\chi_{n}(x)=0$ (see figure
\ref{induction}). But $(M,\tilde{g})$ is causal because otherwise
there is a closed $\tilde{g}$-causal curve $\sigma$, which being a
closed set, is entirely contained in $B_s$ for some $s$. Since
$\tilde{g}\le g_{s+2}$ on $B_s$, this curve is $g_{s+2}$-causal
which contradicts the (compact stable) causality of $(M,g_{s+2})$.
Thus since $(M,\tilde{g})$ is causal and $\tilde{g}>g$, $(M,g)$ is
stably causal.

\end{proof}

\begin{figure}[ht]
\centering \psfrag{g}{ $g$} \psfrag{g1}{ $g_1$} \psfrag{g2}{ $g_2$}
\psfrag{g3}{ $g_3$} \psfrag{g4}{ $g_4$} \psfrag{g5}{ $g_5$}
\psfrag{g6}{ $g_6$} \psfrag{B1}{{$B_1$}} \psfrag{B2}{{$B_2$}}
\psfrag{B3}{{$B_3$}} \psfrag{B4}{{$B_4$}} \psfrag{B5}{{$B_5$}}
\psfrag{B6}{{$B_6$}} \psfrag{A1}{{$A_1$}} \psfrag{A2}{{$A_2$}}
\psfrag{A3}{{$A_3$}} \psfrag{A4}{{$A_4$}} \psfrag{A5}{{$A_5$}}
 \psfrag{M}{
$M$} \psfrag{gp}{$\tilde{g}$}
\includegraphics[width=8cm]{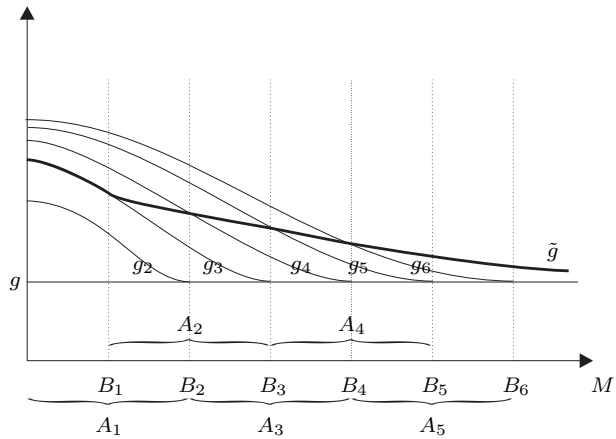}
\caption{The construction of the metric $\tilde{g}>g$ and of the
causal spacetime $(M,\tilde{g})$ in the proof of theorem \ref{vge}.}
\label{induction}
\end{figure}

\begin{remark}
This result is sharp in the sense that causal continuity can not
replace stable causality in the statement of the theorem. Indeed,
the 1+1 spacetime $\mathbb{R} \times S^1$ of coordinates
$(t,\theta)$, $\theta \in [0,2]$, metric $\dd s^2=-\dd t^2+\dd
\theta^2$ with the timelike segment $\theta=1$, $0\le t \le 1$,
removed does not have lightlike lines, is chronological, and thus
stably causal ($t$ is a time function) but it is not reflective and
hence it is not causally continuous. Analogously, chronology can not
be weakened to non-total viciousness indeed, for instance, the
spacetime of figure \ref{triangle} is non-totally vicious, does not
have lightlike lines but is not even chronological. Nevertheless, it
is possible to relax slightly the chronology condition by asking,
for instance, that the chronology violating set be confined in a
compact or even more weakly to have a compact boundary (see the next
section).
\end{remark}

Recall that a time function $t:M \to \mathbb{R}$ is a continuous
function which increases on every causal curves, that is, if
$\gamma:B \to M$ is a causal curve, $b_1<b_2$ implies
$t(\gamma(b_1))<t(\gamma(b_2))$. Hawking proved, improving previous
results by Geroch \cite{geroch70}, that stable causality holds if
and only if the spacetime admits a time function (for the direction,
time function $\Rightarrow$ stable causality, see \cite{hawking68},
for the other direction see \cite{hawking73}). Actually the time
function can be chosen smooth with timelike gradient \cite{bernal04}
(see also \cite{seifert77}). Thus a corollary of theorem \ref{vge}
is

\begin{theorem} \label{vge2}
If $(M,g)$ is chronological and without lightlike lines then it
admits a time function (which can be chosen smooth with timelike
gradient).
\end{theorem}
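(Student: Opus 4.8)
The plan is to read this statement as an immediate corollary of Theorem \ref{vge} combined with the classical characterization of stable causality in terms of time functions. First I would apply Theorem \ref{vge}: since $(M,g)$ is chronological and admits no lightlike lines, it is stably causal. All of the genuine work here—the passage through compact stable causality and the inductive enlargement of the light cones over an exhausting sequence of relatively compact sets—has already been carried out in establishing that theorem, so nothing further needs to be proved about the causal ladder.

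The second step is to invoke the equivalence between stable causality and the existence of a time function. Recall that a stably causal spacetime admits, by Hawking's theorem \cite{hawking68} (building on earlier results of Geroch \cite{geroch70}), a continuous function that strictly increases along every future-directed causal curve; conversely, the existence of such a function forces stable causality \cite{hawking73}. Since $(M,g)$ has just been shown to be stably causal, a time function exists.

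For the refinement to a smooth time function with everywhere timelike gradient I would cite the sharpened version of the equivalence due to Bernal and S\'anchez \cite{bernal04} (see also Seifert \cite{seifert77}), which upgrades the merely continuous time function produced by Hawking's argument to a smooth one with timelike gradient on any stably causal spacetime. The point to emphasize is that there is no real obstacle in this corollary: the entire difficulty is concentrated in Theorem \ref{vge}, and once stable causality is in hand the conclusion follows purely by quoting the standard stable-causality/time-function correspondence in its strongest available form.
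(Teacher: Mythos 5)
Your proposal is correct and follows exactly the route the paper takes: Theorem \ref{vge} gives stable causality, and the Hawking--Geroch equivalence together with the Bernal--S\'anchez smoothing \cite{bernal04} (cf.\ \cite{seifert77}) yields the smooth time function with timelike gradient. The only quibble is that you have swapped the citations for the two directions of the equivalence relative to the paper (\cite{hawking68} is cited there for ``time function $\Rightarrow$ stable causality'' and \cite{hawking73} for the converse), but this does not affect the argument.
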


Recall also that if $t$ is a time function then $F_a=\{p:\, t(p)>a
\}$ is an open future set and $\dot{F}_a=\{p:\, t(p)=a \}$. In
particular, $S_a=\dot{F}_a$ is an acausal boundary (hence edgeless),
that is, $S_a$ is a {\em partial Cauchy hypersurface}
\cite{hawking73}.

The great advantage of theorem \ref{vge2}, is that it allows to
considerably weaken the causality and boundary conditions underlying
most singularity theorems. Indeed, most of them assume some of the
following: (a)  global hyperbolicity, (b) a partial Cauchy
hypersurface (c) a compact achronal edgeless set (d) a trapped set.
Often these global assumptions are made without any further
justification, in fact Senovilla in his review \cite[p.
803-8]{senovilla97} expressed the opinion that these boundary
assumptions may represent the main weak point of singularity
theorems. Fortunately, theorem \ref{vge} justifies the presence of a
foliation of partial Cauchy hypersurfaces and hence may be used to
weaken the global assumptions made in singularity theorems.

\subsection{Absence of lightlike rays}

In this section I am going to consider the implications of the
absence of lightlike rays. Recall that a future ray is a
future-inextendible causal curve which is achronal. Past rays are
defined analogously. Chosen a point $c\in (a,b)$ in a lightlike line
$\gamma: (a,b) \to M$, the portion $\gamma\vert_{[c,b)}$ is a
lightlike future ray while $\gamma\vert_{(a,c]}$  is a lightlike
past ray, thus

\begin{lemma}
The absence of lightlike future (or past) rays implies the absence
of lightlike lines.
\end{lemma}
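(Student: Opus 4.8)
The plan is to prove the contrapositive: assuming the spacetime admits a lightlike line, I will exhibit both a lightlike future ray and a lightlike past ray, so that the absence of either kind of ray forces the absence of lightlike lines. The construction is exactly the one anticipated in the sentence preceding the statement, so the work is entirely in verifying that the two defining properties of a ray are inherited by the appropriate restriction.

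Concretely, let $\gamma:(a,b)\to M$ be a lightlike line, that is, an achronal inextendible causal curve, and fix any parameter value $c\in(a,b)$. First I would check that the future portion $\gamma\vert_{[c,b)}$ is a causal curve (immediate, as it is a subcurve of the causal curve $\gamma$) and that it is future-inextendible. The latter follows because $\gamma$, being inextendible, is in particular future-inextendible at its upper endpoint behavior near $b$; truncating the curve from below at $c$ does not affect what happens as the parameter approaches $b$, so $\gamma\vert_{[c,b)}$ has no future endpoint. Symmetrically, $\gamma\vert_{(a,c]}$ is past-inextendible.

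The one step that deserves genuine (though brief) attention is achronality. I would argue that achronality is hereditary under passing to subcurves: if no two points of $\gamma$ are chronologically related, then a fortiori no two points of the smaller point set $\gamma([c,b))$ are chronologically related, so $\gamma\vert_{[c,b)}$ is achronal, and likewise $\gamma\vert_{(a,c]}$. Hence $\gamma\vert_{[c,b)}$ is a lightlike future ray and $\gamma\vert_{(a,c]}$ is a lightlike past ray.

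I do not expect a real obstacle here; the only point that must be stated with care is the preservation of achronality, since the inextendibility of the truncated curve in the relevant direction and its causal character are transparent. Combining the above, the existence of a lightlike line produces a lightlike future ray and a lightlike past ray, so the absence of lightlike future (or past) rays rules out lightlike lines, which is the claim.
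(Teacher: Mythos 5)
Your argument is correct and is exactly the one the paper gives (in the sentence immediately preceding the lemma): truncate the lightlike line at an interior parameter value to obtain a future ray and a past ray, the only point worth noting being that achronality, causality and the relevant one-sided inextendibility all pass to the restriction. No discrepancy with the paper's reasoning.
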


Thus, assuming the absence of lightlike future rays one expects to
obtain a stronger property than stable causality. Indeed, we have
(see also the related result \cite[Prop. 4]{tipler77})

\begin{theorem} \label{buq}
If $(M,g)$ is chronological and without future lightlike rays then
it is globally hyperbolic (and the only TIP is $M$). An analogous
past version also holds.
\end{theorem}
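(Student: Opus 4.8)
The plan is to reach global hyperbolicity by first upgrading the causality already available for free, then passing through the single structural statement that the only TIP is $M$, from which compactness of the causal diamonds follows almost immediately. First I would note that by the preceding lemma the absence of future lightlike rays implies the absence of lightlike lines, so Theorem \ref{vge} applies and $(M,g)$ is stably causal, hence strongly causal. Thus the whole machinery of achronal boundaries and of indecomposable past sets (Geroch--Kronheimer--Penrose) is at our disposal.

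The heart of the proof is to show that every TIP coincides with $M$. Suppose not: let $\gamma$ be a future-inextendible timelike curve with $P=I^-(\gamma)\ne M$. Since $P$ is a nonempty open past set and $M$ is connected, $\partial P$ is a nonempty closed achronal topological hypersurface. Through any $p\in\partial P$ there passes a null geodesic generator $\lambda$ of $\partial P$; because $\gamma$ is future-inextendible and, by strong causality, non-imprisoning, this generator cannot have a future endpoint in $M$ (its only candidate future endpoints would lie on $\gamma$, which sits in the \emph{open} set $P$ and not on $\partial P$), so $\lambda$ is future-inextendible. Being contained in the achronal set $\partial P$, the curve $\lambda$ is achronal, hence it is a future lightlike ray, contradicting the hypothesis. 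Therefore $I^-(\gamma)=M$ for every future-inextendible causal curve $\gamma$, i.e.\ the only TIP is $M$.

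Next I would deduce that $J^+(p)\cap J^-(q)$ is compact for all $p,q\in M$. Assuming the contrary, the diamond carries a sequence with no limit in it, and applying the limit curve theorem to causal curves joining $p$ to $q$ through such a sequence yields a future-inextendible causal limit curve $\sigma$ issuing from $p$; since every approximating segment lies in $J^-(q)$, one gets $\sigma\subset\overline{J^-(q)}$. Using the elementary identity $I^-(\overline{J^-(q)})\subset I^-(q)$ together with $I^-(q)\ne M$ (otherwise $q\ll q$, against chronology), the set $I^-(\sigma)$ is a proper TIP, contradicting the previous paragraph. Hence every causal diamond is compact, and a strongly causal spacetime with compact causal diamonds is globally hyperbolic; the past version is entirely dual.

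I expect the main obstacle to be the middle step, and more precisely the claim that the boundary of a proper TIP must carry a future lightlike ray: one has to guarantee that a null generator of $\partial I^-(\gamma)$ can be chosen future-inextendible rather than terminating somewhere on $\partial P$. This is exactly where strong causality is indispensable, since it prevents the generator from accumulating back into a compact region and forces any putative future endpoint to lie on $\gamma$, which is impossible. Once this is secured the remaining assembly---the limit curve argument and the identity $I^-(\overline{J^-(q)})\subset I^-(q)$---is routine.
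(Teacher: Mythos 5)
Your argument is correct in substance but follows a genuinely different route from the paper's. The paper, after obtaining stable causality from Theorem \ref{vge}, uses the resulting time function directly: it picks a level set $S_a$ with $a>t(q)$, observes that the past Cauchy horizon $H^{-}(S_a)$ of this edgeless acausal hypersurface is generated by future lightlike rays and is therefore empty, concludes $J^{+}(p)\cap J^{-}(q)\subset P_a\subset D(S_a)$, and reads off compactness from the global hyperbolicity of $D(S_a)$; the statement about TIPs is then a one-line afterthought. You instead make the TIP statement the engine of the whole proof: the boundary of a proper TIP is ruled by achronal, future-inextendible null generators (future endpoints being excluded because $\gamma$ lies in the open set $P$ and, by strong causality, is closed as a subset), hence by future lightlike rays, so the only TIP is $M$; compactness of diamonds then follows from the limit curve theorem. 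Both proofs ultimately rest on the same fact---achronal boundaries and horizons are generated by future lightlike rays---applied to different sets ($H^{-}(S_a)$ versus $\partial I^{-}(\sigma)$). Your version has the advantage of not needing the time function at all (strong causality, already available from Lemma \ref{vfd}, is all you use), while the paper's is shorter because $D(S_a)$ does the compactness bookkeeping automatically. Two details in your last step deserve care. First, when the diamond fails to be compact because it is not closed, i.e.\ $x_n\to x\notin J^{+}(p)\cap J^{-}(q)$, the limit curve theorem may give you a curve connecting $p$ to $x$ together with a future-inextendible causal limit curve issuing from $x$ rather than from $p$; that curve still lies in $\overline{J^{-}(q)}$, so your contradiction survives, but the curve need not ``issue from $p$'' as written. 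Second, the limit curve $\sigma$ is causal rather than timelike, so $I^{-}(\sigma)$ is not literally of the form $I^{-}(\gamma)$ with $\gamma$ timelike; either observe that the absence of future lightlike rays forces $\sigma\subset I^{-}(\sigma)$ (otherwise a terminal segment of $\sigma$ would itself be an achronal future ray), after which your generator argument applies verbatim to the proper open past set $I^{-}(\sigma)$, or replace $\sigma$ by a timelike curve with the same chronological past.
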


\begin{proof}

Since there are no future rays then there are no lightlike lines and
the spacetime is stably causal and admits a time function $t$. Let
$p\le q$, we have to prove that $C=J^{-}(q)\cap J^{+}(p)$ is
compact. Take $r \in I^{+}(q)$ so that $a=t(r)>t(q)$, and consider
the partial Cauchy surface $S_a$. Since $C\subset I^{-}(r)$, all the
points in $C$ stay in the past set $P_a=\{x:\, t(x)<a \}$. The set
$H^{-}(S_a)$ is generated by future lightlike rays (as $S_a$ is
edgeless) and since by assumption there is no future lightlike ray,
$H^{-}(S_a)$ is empty. Thus $C\subset P_a \subset D^{-}(S_a)\subset
D(S_a)$, the last set being globally hyperbolic. Note that no causal
curve from $p$ can escape $D(S_a)$ and hence $P_a$ to return to $q$,
as $t$ is a time function. Hence $C=J^{-}_{D(S_a)}(q)\cap
J^{+}_{D(S_a)}(p)$ is compact. Finally, $(M,g)$ has no TIP but $M$
because the boundary of any TIP is generated by future lightlike
rays.
\end{proof}

\begin{figure}[ht]
\centering \psfrag{M}{{\footnotesize
$\!\!\!M\backslash\bar{\mathcal{C}}$}} \psfrag{C}{{\footnotesize
$\mathcal{C}$}}  \psfrag{R}{\!\!\!\!\!\!{\footnotesize Remove}}
\psfrag{Identify}{\!\!\!\!\!\!\!\!{\footnotesize Identify}}
\includegraphics[width=7.5cm]{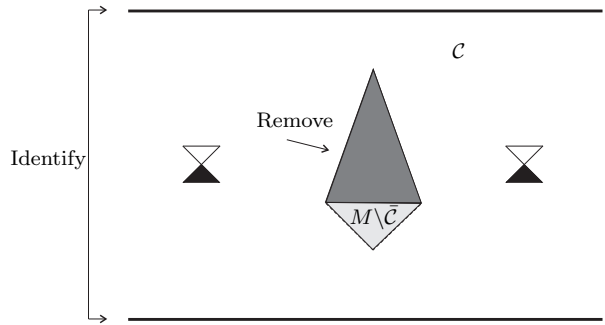}
\caption{The figure displays 1+1  Minkowski spacetime with two
spacelike slices identified and a triangle removed. If the angle at
the top of the triangle is small enough there are no  past lightlike
rays.} \label{triangle}
\end{figure}

Note that in theorem \ref{buq} chronology can not be weakened to
non-total viciousness, i.e. to the condition $\mathcal{C}\ne M$
where $\mathcal{C}$ is the chronology violating set. Indeed, figure
\ref{triangle} gives a counterexample. Nevertheless, if one replaces
the absence of {\em future} lightlike rays with the absence of
lightlike rays  then the proof of theorem \ref{cfs} will show that a
non-totally vicious spacetime is chronological (by showing that
$\dot{\mathcal{C}}$ if non-empty, contains a lightlike ray), and
thus one has:

\begin{theorem}
If $(M,g)$ is non-totally vicious and without lightlike rays then it
is globally hyperbolic (and there are no TIP or TIF but $M$).
\end{theorem}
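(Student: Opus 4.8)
The plan is to reduce the statement to the chronological case already settled in Theorem~\ref{buq}. Writing $\mathcal{C}$ for the chronology violating set, non-total viciousness means $\mathcal{C}\ne M$, and since $\mathcal{C}$ is open and $M$ is connected it suffices to prove $\dot{\mathcal{C}}=\emptyset$: this forces $\mathcal{C}$ to be clopen and hence, being a proper subset, empty, so that $(M,g)$ is chronological. Once chronology is established, the absence of lightlike rays means in particular that there are neither future nor past lightlike rays, so Theorem~\ref{buq} together with its past version yields global hyperbolicity, while the statements on TIPs and TIFs follow because the boundary of any such indecomposable set is generated by future (resp. past) lightlike rays, of which there are none.

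The heart of the proof is therefore the claim that a nonempty $\dot{\mathcal{C}}$ must contain a lightlike ray, contradicting the hypothesis. I would argue by contradiction: fix $q\in\dot{\mathcal{C}}$ and a sequence $p_n\in\mathcal{C}$ with $p_n\to q$, and through each $p_n$ (which satisfies $p_n\ll p_n$) choose a closed timelike curve $\lambda_n$, regarded as a future-directed timelike curve returning to $p_n$. By local convexity no closed causal curve is contained in a small convex neighbourhood, so for a fixed convex $U\ni q$ the $h$-lengths of the $\lambda_n$ are bounded away from zero. Applying the limit curve theorem of \cite{minguzzi07c} based at the accumulation point $q$, I distinguish two cases according to whether these lengths stay bounded above. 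If they do, a subsequence converges to a genuine closed causal curve $\mu$ through $q$. This $\mu$ must be achronal: if $x\ll y$ for $x,y\in\mu$, then closing up with the causal arcs of $\mu$ gives $q\le x\ll y\le q$ and hence $q\ll q$, i.e. $q\in\mathcal{C}$, contrary to $q\in\dot{\mathcal{C}}$. An achronal closed causal curve carries no timelike subsegment, so it is a closed lightlike geodesic, and unwound it is a lightlike line, which contains a lightlike ray.

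If instead the loop lengths are unbounded, the limit curve theorem yields an inextendible causal limit curve $\Lambda$ through $q$, a subsequence of the $\lambda_n$ converging to it uniformly on compacta. The decisive point---and the step I expect to be the main obstacle---is to show that $\Lambda$ is achronal, so that it is a lightlike line and a fortiori contains a lightlike ray. Here one again exploits $q\notin\mathcal{C}$: any relation $x\ll y$ between points of $\Lambda$, pulled back by openness of $I^{+}$ to nearby points $x_n\ll y_n$ on the loops $\lambda_n$, would combine with the complementary causal arc of $\lambda_n$ to produce, in the limit, a timelike loop based at $q$, again contradicting $q\not\ll q$. Making this limiting step rigorous is delicate precisely because the winding loops need not be contained in a compact set, which is exactly why the absence of lightlike rays (rather than merely of lightlike lines) is the natural hypothesis and why the analysis of $\dot{\mathcal{C}}$ deserves to be isolated. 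In all cases a lightlike ray is produced in $\dot{\mathcal{C}}$, the sought contradiction, whence $\dot{\mathcal{C}}=\emptyset$ and the theorem follows from Theorem~\ref{buq}.
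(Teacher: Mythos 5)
Your overall strategy is the paper's: reduce to the chronological case by showing that a non-empty $\dot{\mathcal{C}}$ would contain a lightlike ray, conclude $\mathcal{C}=\emptyset$ by connectedness, and then invoke theorem \ref{buq} and its past version (the paper obtains the ray statement from the proof of theorem \ref{cfs}, applied at a point of $\dot{\mathcal{C}}$). Your treatment of the bounded case (closed achronal causal curve, unwound to a lightlike line) is also the paper's. The problem is the unbounded case, and the step you yourself flag as ``the main obstacle'' is not merely delicate: the claim that the inextendible limit curve $\Lambda$ through $q$ is achronal is \emph{false} in general. The spacetime of figure \ref{triangle} is a counterexample to your intermediate claim: it is non-totally vicious, non-chronological, and has \emph{no lightlike lines at all}, yet $\dot{\mathcal{C}}\neq\emptyset$; so the limit construction at a boundary point there cannot possibly yield an achronal inextendible curve. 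What it does yield is a lightlike \emph{ray}, which is exactly why the theorem needs absence of rays rather than absence of lines.

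Concretely, your argument for achronality of $\Lambda$ does not close. If $x\ll y$ with $x,y\in\Lambda$ and you pull this back to $x_n\ll y_n$ on the loops $\lambda_n$ and close up with the complementary timelike arc of $\lambda_n$, the resulting timelike loop is based at $x_n$ — a point already known to lie in $\mathcal{C}$ — not at $q$, so no contradiction with $q\notin\mathcal{C}$ arises. To contradict $q\not\ll q$ you need a point of the \emph{future} half of $\Lambda$ lying in $I^{+}(q)$ \emph{and} a point of the \emph{past} half lying in $I^{-}(q)$ simultaneously; only then do the closed timelike curves $\lambda_n$ let you write $q\ll y_n\ll w_n\ll q$. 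This is precisely the dichotomy the paper exploits: either the future-inextendible half $\gamma^{q}$ avoids $I^{+}(q)$ or the past-inextendible half avoids $I^{-}(q)$, and by the push-up property the half that avoids the corresponding chronological set is achronal, i.e.\ a lightlike (future or past) ray. That weaker conclusion suffices, since the hypothesis forbids all lightlike rays. With this replacement for your achronality claim the rest of your proof goes through as written.
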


\subsection{Physical considerations}
Theorem \ref{buq}  can be used as a singularity theorem though the
null convergence condition is not enough to guarantee that a
future-complete future-inextendible (affinely parametrized)
lightlike geodesic $\gamma: [a,+\infty) \to M$ admits a pair of
conjugate points. A sufficient condition is  Tipler's \cite[Prop.
1]{tipler77}
\begin{equation} \label{pdh}
\lim_{s \to +\infty} \,[(s-a)\int_{s}^{+\infty} \!\!\!\!\!R_{c d}
n^{c} n^{d}\, \dd s']>1,
\end{equation}
where $n^{c}$ is the tangent vector to $\gamma$ at $\gamma(s)$.
Weaker conditions were also considered by  Borde \cite{borde87}.
These conditions physically state that the energy density should not
drop off too sharply. The assumption is reasonable in those cases
where the universe is contracting (or taking the past version,
expanding) as one would expect the energy density to increase rather
than decrease.

Thus we get the following singularity theorem (past version)
\begin{theorem} \label{oioi}
The following conditions cannot all hold
\begin{itemize}
\item[(i)] $(M,g)$ is past null geodesically complete,
\item[(ii)] $(M,g)$ is chronological
\item[(iii)] $(M,g)$ is non-globally hyperbolic,
\item[(iv)]  Some energy  condition which implies the presence of conjugate points in  past-complete
past-inextendible  lightlike  geodesics (e.g. \[\lim_{s \to -\infty}
\,[(b-s)\int_{-\infty}^{s} \!\!\!\!\!R_{c d} n^{c} n^{d}\, \dd
s']>1,\] holds on any past-inextendible lightlike geodesic $\gamma:
(-\infty,b) \to M$).
\end{itemize}
\end{theorem}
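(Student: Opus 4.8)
The plan is to argue by contradiction and reduce the statement to the past version of Theorem \ref{buq}. Assume that all four conditions (i)--(iv) hold simultaneously; I will deduce that $(M,g)$ is globally hyperbolic, contradicting (iii). The whole argument is the past dual of the reasoning already assembled in Proposition \ref{mfe} and Theorem \ref{buq}: condition (iv) is used to pass from completeness to the absence of past lightlike rays, and Theorem \ref{buq} then supplies global hyperbolicity.

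First I would observe that, by past null geodesic completeness (i), every past-inextendible lightlike geodesic is past-complete, so the energy condition (iv) applies to each of them and furnishes a pair of conjugate events on every past-inextendible lightlike geodesic. The displayed integral inequality is exactly Tipler's sufficient condition (past version) for the formation of conjugate points along a complete lightlike geodesic, so this step is a direct application of \cite{tipler77}, entirely analogous to the use of \cite[Prop. 4.4.5]{hawking73} in the proof of Proposition \ref{mfe}.

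Next I would show that the spacetime has no past lightlike rays. A past lightlike ray is by definition a past-inextendible achronal causal curve, and by achronality it is a lightlike geodesic that maximizes the Lorentzian length between any two of its points. By the previous step such a geodesic carries a pair of conjugate events $\gamma(s_1),\gamma(s_2)$ with $s_1<s_2$; but conjugate events are incompatible with maximality, since past the second conjugate event the geodesic can be deformed to a strictly timelike curve, placing $\gamma(s_3)$ (for $s_3$ slightly larger than $s_2$) in $I^{+}(\gamma(s_1))$ and thereby destroying achronality. This is the ray analogue of the cited \cite[Chap. 10, Prop. 48]{oneill83}. Hence no past-inextendible lightlike geodesic is achronal, i.e. there are no past lightlike rays. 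Having chronology from (ii) together with the absence of past lightlike rays, I would then invoke the past version of Theorem \ref{buq} to conclude that $(M,g)$ is globally hyperbolic, contradicting (iii); thus the four conditions cannot all hold.

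The main obstacle is precisely the middle step: one must ensure that the conjugate pair produced by (iv) genuinely obstructs the achronality of a merely past-inextendible (half-infinite) curve, and not only of a bi-inextendible line as in the quoted O'Neill proposition. The completeness assumption (i) is what settles this, guaranteeing that the conjugate pair forms at finite affine parameter, well inside the domain of $\gamma$, so that there is room past $\gamma(s_2)$ for the chronological relation $\big(\gamma(s_1),\gamma(s_3)\big)\in I^{+}$ to appear and the standard focusing/variation argument applies verbatim.
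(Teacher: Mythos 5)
Your proof is correct and follows exactly the route the paper intends: Theorem \ref{oioi} is stated there without a separate proof, as an immediate consequence of the past version of Theorem \ref{buq} once condition (iv), together with past null completeness, is used via Tipler's criterion to rule out past lightlike rays. Your explicit handling of the point that the conjugate pair must obstruct the achronality of a half-infinite past-inextendible geodesic (not merely of a bi-inextendible line as in the quoted O'Neill proposition) is a useful clarification of a step the paper leaves implicit.
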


The nice feature of this theorem is that there is essentially no
boundary assumption and the causality conditions are quite weak.
There is no assumption on the existence of partial Cauchy surfaces
or trapped sets. Of course, the strongest assumption which must be
physically justified is made in (iv) but the local expansion of the
Universe together with the cosmic background radiation, seem to
support it. Then the theorem states, that under the said energy
conditions the spacetime is either globally hyperbolic or has
singularities. Used in conjunction with Penrose's (1965), and
Hawking and Penrose's (1970) singularity theorems \cite{hawking73}
it allows to
characterize quite precisely what a spacetime looks like if it
contains trapped surfaces and it is still null geodesically
complete.


We have

\begin{theorem} \label{pfk}
Let $(M,g)$ be a spacetime of dimension greater than 2. If
\begin{itemize}
\item[(i)] $(M,g)$ is null geodesically complete,
\item[(ii)] $(M,g)$ is chronological,
\item[(ii)] There is a future trapped surface,
\item[(iv)]  The timelike convergence,  the generic condition, together with some energy  condition which implies the presence of conjugate points in  past-complete
past-inextendible  lightlike  geodesics (e.g. \[\lim_{s \to -\infty}
\,[(b-s)\int_{-\infty}^{s} \!\!\!\!\!R_{c d} n^{c} n^{d}\, \dd
s']>1,\] holds on any past-inextendible lightlike geodesic $\gamma:
(-\infty,b) \to M$).
\end{itemize}
then the spacetime is globally hyperbolic with compact space slices
and has a incomplete timelike line.
\end{theorem}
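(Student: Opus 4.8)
The plan is to first climb to global hyperbolicity using the no-ray machinery of this section, then to read off the compactness of the slices from Penrose's theorem, and finally to build the singular timelike line by a limit curve construction. I would begin with global hyperbolicity: by (i) every past-inextendible lightlike geodesic is past-complete, so the energy condition in (iv) places a pair of conjugate events on each such geodesic, and since conjugate points destroy achronality none of them is achronal, whence $(M,g)$ admits no past lightlike ray. Chronology (ii) and the past version of Theorem \ref{buq} then give global hyperbolicity, and by Theorem \ref{vge2} a time function whose level sets $S_a$ are Cauchy hypersurfaces foliating $M$.

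For the compactness of the slices I would argue by contradiction. The timelike convergence condition in (iv) implies the null convergence condition, while (iii) supplies a closed future trapped surface; were some Cauchy hypersurface non-compact, Penrose's 1965 theorem \cite{hawking73} would force future null geodesic incompleteness, contradicting (i). Hence every Cauchy slice is compact.

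The incomplete timelike line I would obtain as follows. Choosing Cauchy surfaces $S_{-n}$ and $S_n$ that recede to the past and to the future, I join points $p_n\in S_{-n}$, $q_n\in S_n$ with $p_n\le q_n$ by a maximizing geodesic $\gamma_n$ (global hyperbolicity); each $\gamma_n$ meets the fixed compact slice $S_0$ at a point $r_n$, and compactness yields $r_n\to r\in S_0$ along a subsequence. The limit curve theorem produces an inextendible causal geodesic $\gamma$ through $r$ which reaches every Cauchy slice and, as a limit of maximizers in a globally hyperbolic spacetime, maximizes the Lorentzian distance between any two of its points, so $\gamma$ is a causal line. Since $(M,g)$ has no lightlike lines, $\gamma$ cannot be null and is therefore a timelike line; and if $\gamma$ were complete the timelike convergence and generic conditions of (iv), where the hypothesis $\dim M\ge 3$ enters, would place a pair of conjugate points on it in contradiction with its maximality, so $\gamma$ is incomplete. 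This step is essentially the Hawking--Penrose focusing argument adapted to the globally hyperbolic setting already in hand.

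The main obstacle is this last construction, specifically ensuring that the limiting line is honestly timelike and honestly maximizing. The decisive input is that the absence of lightlike lines, secured in the first step, rules out the one bad alternative of a null limit, while global hyperbolicity guarantees through the continuity and finiteness of the Lorentzian distance that a limit of maximizing segments remains maximizing; the focusing from (iv) then converts maximality into the required incompleteness.
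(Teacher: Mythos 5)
Your proposal is correct and follows the paper's own decomposition exactly: theorem \ref{oioi} (equivalently, the past version of theorem \ref{buq} after killing past lightlike rays via completeness and the focusing condition) gives global hyperbolicity, Penrose's theorem forces the Cauchy slices to be compact on pain of null incompleteness, and the incomplete timelike line comes from the Hawking--Penrose focusing argument. The only difference is that where the paper simply cites ``the proof of the Hawking--Penrose theorem'' for the last step, you carry it out explicitly as a limit of maximizing geodesics between receding Cauchy surfaces, using the continuity of the Lorentzian distance in the globally hyperbolic setting and the absence of lightlike lines to exclude a null limit --- a legitimate and cleaner rendering of that step given that global hyperbolicity and compact slices are already in hand.
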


\begin{proof}
The conditions (i), (ii) and (iv) imply (v): the spacetime is
globally hyperbolic (theorem \ref{oioi}). The Cauchy hypersurfaces
are either compact or non-compact. In the latter case (iii) and (v)
imply, by the Penrose singularity theorem, that the spacetime is
null geodesically incomplete. Thus (vi): the Cauchy hypersurfaces
are compact. The proof of  the Hawking-Penrose theorem
implies that (i), (ii) or (vi), and
(iv) imply that there is a incomplete timelike line.
\end{proof}

Since the existence of trapped surfaces is a quite natural
consequence of general relativity if matter concentrate enough,
theorem \ref{buq} supports the global hyperbolicity of the spacetime
(and a closed space) provided it is null geodesically complete.
Since the conditions are quite reasonable one concludes that the
spacetime is either null geodesically incomplete or timelike
geodesically incomplete (or both).


Finally I would like to stress that the assumption of null geodesic
completeness does not lead to a spacetime picture which contradicts
observations. Thus theorems \ref{buq} and \ref{vge} may have a
``positive'' role in proving the good causal property of spacetime
rather than being used only to prove its singularity. As a matter of
fact they can be used to do both (theorem \ref{pfk}).




\section{The non-chronological case}

So far we have studied the consequence of the absence of lightlike
lines under the assumption of chronology. Let us consider the other
possibility, namely non-chronological spacetimes. Denote with
$\mathcal{C}$ the chronology violating set, with
$\mathcal{C}_\alpha$,
$\mathcal{C}=\bigcup_\alpha\mathcal{C}_\alpha$, its (open)
components and with $B_{\alpha k}$ the (closed) components of the
respective boundaries $\dot{\mathcal{C}}_\alpha= \bigcup_k B_{\alpha
k}$.

The next result joins two theorems, one by Kriele \cite[Theorem
4]{kriele89} who improved previous results by Tipler \cite{tipler77}
and the other by the author \cite{minguzzi07c}.


\begin{theorem} \label{cfs}
A non-chronological spacetime without lightlike lines is either
totally vicious (i.e. $\mathcal{C}=M$) or it has a non-empty
chronology violating set $\mathcal{C}$, the boundaries
$\dot{\mathcal{C}}_\alpha $ of the components $\mathcal{C}_\alpha$,
 are disjoint and
the components $B_{\alpha k}$ of those boundaries are all
non-compact. In particular non-totally vicious spacetimes without
lightlike lines are non-compact.
\end{theorem}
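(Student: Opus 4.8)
The plan is to exploit two facts already established: that the absence of lightlike lines makes $A^{+}=\bar J^{+}=K^{+}$ closed and transitive, and that every point of $\mathcal{C}$ is chronology violating, i.e. $p\ll p$. The starting observation is the characterisation $\mathcal{C}_{\alpha}=I^{+}(x)\cap I^{-}(x)$ for any $x\in \mathcal{C}_{\alpha}$, which is immediate from the definition of component: if $y\in\mathcal{C}_\alpha$ then $x\ll y\ll x$, and conversely $y\in I^{+}(x)\cap I^{-}(x)$ gives $x\ll y\ll x$, hence $y\ll y$ and $y\in\mathcal{C}_\alpha$. In particular a boundary point $p\in\dot{\mathcal{C}}_\alpha$ is a limit of points $z_{n}\in\mathcal{C}_\alpha$ with $x\ll z_{n}\ll x$, so that $(x,p)\in A^{+}$ and $(p,x)\in A^{+}$, while $p\notin\mathcal{C}$ so that $p\ll p$ fails.

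First I would prove that distinct boundaries are disjoint. Suppose $p\in\dot{\mathcal{C}}_\alpha\cap\dot{\mathcal{C}}_\beta$ and pick $x\in\mathcal{C}_\alpha$, $y\in\mathcal{C}_\beta$. By the previous remark $(x,p),(p,x),(y,p),(p,y)\in A^{+}$, hence by transitivity of $A^{+}$ one gets $(x,y)\in A^{+}$ and $(y,x)\in A^{+}$. The key upgrade is that $x$ and $y$ are chronology violating, $x\ll x$ and $y\ll y$: writing $(x,y)\in A^{+}$ as $x_{n}\ll y_{n}$ with $x_{n}\to x$, $y_{n}\to y$, the openness of $I^{+}(x)$ and $I^{-}(y)$ gives $x\ll x_{n}$ and $y_{n}\ll y$ for large $n$, whence the chain $x\ll x_{n}\ll y_{n}\ll y$ yields $x\ll y$; symmetrically $y\ll x$. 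Thus $x\ll y\ll x$, so $x$ and $y$ lie in the same component and $\mathcal{C}_\alpha=\mathcal{C}_\beta$, a contradiction. Hence the boundaries $\dot{\mathcal{C}}_\alpha$ are pairwise disjoint.

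Next I would show each component $B_{\alpha k}$ of $\dot{\mathcal{C}}_\alpha$ is non-compact. Fixing $x\in\mathcal{C}_\alpha$, a boundary point $p$ lies on $\partial I^{+}(x)$ or on $\partial I^{-}(x)$ (it cannot belong to $I^{+}(x)\cap I^{-}(x)$, else $p\in\mathcal{C}_\alpha$). By the standard achronal boundary theorem these null cones are ruled by lightlike geodesic generators without future (resp. past) endpoint, so that through each point of $\dot{\mathcal{C}}_\alpha$ there passes a lightlike generator inextendible on the relevant side within the achronal boundary. If $B_{\alpha k}$ were compact, such a generator would be a future- (or past-) inextendible causal curve imprisoned in the compact set $B_{\alpha k}$; by the result that imprisonment in a compact forces a lightlike line \cite{minguzzi07f} (equivalently, by applying the limit curve theorem \cite{minguzzi07c} to the generators) one obtains an achronal inextendible lightlike geodesic, i.e. a lightlike line, contradicting the hypothesis. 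Therefore every $B_{\alpha k}$ is non-compact.

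Finally, the dichotomy and the last sentence follow quickly: a non-chronological spacetime has $\mathcal{C}\ne\emptyset$ and is either totally vicious, $\mathcal{C}=M$, or $\mathcal{C}\ne M$ with $\dot{\mathcal{C}}\ne\emptyset$, so the previous steps apply; and a non-totally vicious spacetime is either chronological, hence non-compact since every compact spacetime contains a closed timelike curve, or non-chronological, hence containing the non-compact closed set $B_{\alpha k}$ and therefore non-compact. I expect the non-compactness step to be the main obstacle: one must verify that the null generators genuinely remain in $\dot{\mathcal{C}}_\alpha$ and are inextendible there, so that compactness produces an honestly \emph{achronal} inextendible lightlike geodesic in $M$. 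It is precisely here that the transitivity of $A^{+}$ is needed to control the boundary and that the limit curve machinery must be invoked with care to preserve achronality in the limit.
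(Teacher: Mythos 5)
Your argument for the disjointness of the boundaries $\dot{\mathcal{C}}_\alpha$ is correct and is actually a nice self-contained alternative to the paper, which defers that part to \cite{minguzzi07c}: you use the transitivity of $A^{+}$ (valid here since there are no lightlike lines) together with the characterization $\mathcal{C}_\alpha=I^{+}(x)\cap I^{-}(x)$ and the upgrade $x\ll x_n\le y_n\ll y$, and every step checks out.

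The non-compactness step, however, has a genuine gap, and it is exactly the one you flag at the end without closing. You take $p\in\dot{\mathcal{C}}_\alpha\cap\partial I^{+}(x)$ and the past-inextendible null generator $\lambda$ of $\partial I^{+}(x)$ through $p$ (past-inextendible because $x\in I^{+}(x)$ cannot be an endpoint on the boundary). But $\lambda$ lives in $\partial I^{+}(x)$, which is in general much larger than $\dot{\mathcal{C}}_\alpha=\partial\bigl(I^{+}(x)\cap I^{-}(x)\bigr)$: a point $q<p$ on $\lambda$ satisfies $q\in\partial I^{+}(x)\cap\overline{I^{-}(x)}$, and while $q\in I^{-}(x)$ does force $q\in\dot{\mathcal{C}}_\alpha$, once the generator reaches $\partial I^{+}(x)\cap\partial I^{-}(x)$ there is no reason for $q$ to lie in $\overline{I^{+}(x)\cap I^{-}(x)}$ at all. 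So compactness of $B_{\alpha k}$ does not imprison $\lambda$, and the appeal to the imprisonment-implies-lightlike-line result of \cite{minguzzi07f} does not get off the ground. The paper avoids this by not using boundary generators: it applies the limit curve theorem directly to the closed timelike curves $\sigma_n\subset\mathcal{C}_\alpha$ accumulating on $x\in B_{\alpha k}$, obtaining either a closed achronal causal curve (hence a lightlike line by winding) or a lightlike ray which, being a limit of curves in $\mathcal{C}_\alpha$, lies in $\overline{\mathcal{C}_\alpha}$ and, being achronal, cannot meet $\mathcal{C}_\alpha$; it is therefore automatically contained in $B_{\alpha k}$, and only then does the imprisonment machinery of \cite{minguzzi07f} produce the lightlike line. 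To repair your proof you would have to either establish the containment of the generator in $\overline{\mathcal{C}_\alpha}$ (which seems not to be true in general) or switch to limits of the closed timelike curves themselves, as the paper does. The concluding dichotomy and the non-compactness of $M$ are fine once the main part is in place.
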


For the proof that the sets $\dot{\mathcal{C}}_\alpha $ are disjoint
I refer the reader to \cite{minguzzi07c}.  Instead, I elaborate on
Kriele's argument
by giving a slightly
different proof that the boundaries $B_{\alpha k}$ are non-compact.
Indeed, I can give a shorter proof thanks to the limit curve theorem
contained in \cite{minguzzi07c} and to the results on totally
imprisoned curves contained in \cite{minguzzi07f}.

Recall that in the chronology violating set $\mathcal{C}$, Carter's
equivalence relation $p\sim q$ iff $p \ll q\ll p$ gives rise to open
equivalence classes, moreover, since $\mathcal{C}$  is open, if $x
\in \dot{\mathcal{C}}$ it cannot be $x\in \mathcal{C}$. Recall also
that with $\Omega_f(\eta)$ it is denoted the set $\Omega_f(\eta)=
\bigcap_{t \in \mathbb{R}}\overline{\eta_{[t,+\infty)}} $ of
accumulation points in the future of the causal curve $\eta$, and
analogously in the past case. This set is always closed, moreover,
it is non-empty iff the curve is partially imprisoned in a compact
\cite{minguzzi07f}.

\begin{proof}
Assume that $B_{\alpha k} \subset \dot{\mathcal{C}}_\alpha$ is
compact and let $x \in B_{\alpha k}$.  Let $x_n \in
\mathcal{C}_\alpha$ such that $x_n \to x$, and let $U\ni x$ be a
convex set. There are closed timelike curves $\sigma_n \subset
\mathcal{C}_\alpha$ of starting and ending point $x_n$, which are
necessarily not entirely contained in $U$ (every convex set is
causal). Let $z=x$, then by the limit curve theorem
\cite{minguzzi07c} (point 2) there are two cases (corresponding to
$0<b<+\infty$, or $b=+\infty$ in that reference).

In the first case there is a closed continuous causal curve $\gamma
\in \bar{\mathcal{C}}_\alpha$ passing through $x$. It must be
achronal since if $p,q \in \gamma$, $p \ll q$, then $x\le p\ll q \le
x$ and hence $x\ll x$ which implies $x\in \mathcal{C}$ a
contradiction. Thus $\gamma$ is a geodesic with no discontinuity in
the tangent vectors at $x$. It can be extended to a lightlike line
$\gamma$ by making infinite  rounds over $\gamma$ (note that in this
case $\Omega_f(\gamma)=\Omega_p(\gamma)=\gamma$).

In the second case there are a future inextendible  continuous
causal curve $\gamma^x \subset \bar{\mathcal{C}}_\alpha$ starting at
$x$ and a past inextendible continuous causal curve $\gamma^z
\subset \bar{\mathcal{C}}_\alpha$ ending at $x$. If $\gamma^x \cap
I^+(x)\ne \emptyset$ and $\gamma^z \cap I^-(x)\ne \emptyset$ then
for sufficiently large $n$, since $I^{+}$ is open,  it would be
possible to complete a segment of $\gamma_n$ to a closed timelike
curve passing through $x$ hence $x\in \mathcal{C}$, a contradiction.
Thus $\gamma^x$ or $\gamma^z$, say $\gamma^x$, is a lightlike ray.
In particular $\gamma^x$ being a lightlike ray is achronal and hence
can not enter $\mathcal{C}_\alpha$, thus $\gamma^x \subset B_{\alpha
k}$. Now, since $ B_{\alpha k}$ is compact and $B_{\alpha k} \cap
\mathcal{C}=\emptyset$, results on totally imprisoned causal curves
can be applied \cite[theorem 3.6]{minguzzi07f}. In particular there
is a minimal non-empty closed achronal  set $\Omega \subset
\Omega_f(\gamma^x) \subset B_{\alpha k}$ such that through each
point of $\Omega$ there passes one and only one lightlike line, this
line is entirely contained in $\Omega$ and for every line $\alpha
\subset \Omega$, $\Omega_f(\alpha)=\Omega_p(\alpha)=\Omega$. Just
the existence of a lightlike line suffices to conclude the proof
that the boundaries $B_{\alpha k}$ are non-compact.

The last statement in a slightly weaker form has been first obtained
by Tipler \cite[theorem 7]{tipler77}. It follows from the
observation that a compact spacetime has a non-empty chronology
violating set $\mathcal{C}$ (see \cite[Prop. 6.4.2]{hawking73}) thus
either $\mathcal{C}=M$ or $\dot{\mathcal{C}}$ is non-empty and
compact in contradiction with the absence of lightlike lines.

\end{proof}
%
%
%

These results restrict the possible chronology violation in
spacetimes without lightlike lines, for instance they state that the
chronology violation must extend to infinity. In principle this fact
does not mean that a chronology violating region can not develop
from regular data. For this to be the case stronger global
assumptions than the only absence of lightlike lines should be
assumed \cite{tipler77,krasnikov02}.

Instead of trying to remove chronology violating sets altogether
from the spacetime, it is natural to consider what theorem \ref{vge}
may say in the cases of chronology violation. The idea is that if
$(M,g)$ has a non-empty chronology violating set but $M\ne
\bar{\mathcal{C}}$ then the spacetime $(N,g\vert_{N})$, where $N$ is
any connected components of $M\backslash \bar{\mathcal{C}}$, has
empty chronology violating set.

\begin{figure}[ht]
\centering \psfrag{N}{{\footnotesize $N$}}
\psfrag{G1}{{\footnotesize $\gamma_1$}} \psfrag{G2}{{\footnotesize
$\gamma_2$}} \psfrag{R}{\!\!\!\!\!\!{\footnotesize Remove}}
\psfrag{Identify}{\!\!\!\!\!\!\!\!{\footnotesize Identify}}
\includegraphics[width=8cm]{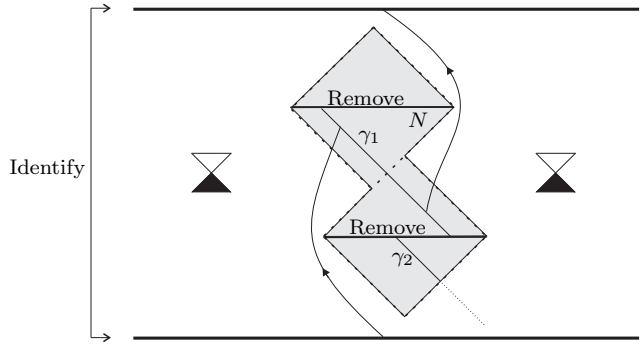}
\caption{If $(M,g)$ has a non-empty chronology violating set  and
has no lightlike line, $(N,g\vert_N)$, with $N$ any component of the
shaded region $M\backslash \bar{\mathcal{C}}$, may admit lightlike
lines (e.g. the causal curves $\gamma_1$ or $\gamma_2$).}
\label{mkl}
\end{figure}

However, even if $(M,g)$ does not have lightlike lines,
$(N,g\vert_{N})$ may have lightlike lines (see figure \ref{mkl}).
This may happen because a lightlike line $\gamma$ for
$(N,g\vert_{N})$ is not inextendible in $M$, and thus once extended
it may enter the chronology violating set  (the geodesic $\gamma_2$
in the figure). Another possibility is that while $\gamma$ is also
inextendible in $M$, the enlargement of the spacetime enlarges the
set of timelike curves and hence the possibilities that $\gamma$ is
not a line (the geodesic $\gamma_1$ in the figure). Thus it is not
possible to infer from the absence of lightlike lines for $(M,g)$
the same property for $(N,g\vert_{N})$. Actually, neither the
converse is true, the Misner spacetime  (with {region} I=$N$, see
figure 32 of \cite{hawking73}) does not have lightlike lines but its
analytic extension (I+II) where II is the chronology violating set
for I+II, does admit a lightlike line given by the Misner boundary.

There is therefore no immediate way to apply theorem \ref{vge} to
the non-chronological case apart from that of motivating on physical
grounds that some component $N$ does not have lightlike lines.


\section{Conclusions} \label{ojb}

A proof has been given that chronological spacetimes without
lightlike lines are stably causal, and that non-totally vicious
spacetimes without lightlike rays are globally hyperbolic (together
with some other variations). The properties: (i) chronology, (ii)
null convergence condition and (iii) null generic condition, are
quite reasonable from a physical point of view, moreover, for our
purposes (ii) can be weakened to the averaged null convergence
condition.
Assuming (i), (ii) and (iii) the result of the title of this work
translates into the physical statement that  null geodesically
complete spacetimes are stably causal and therefore admit a time
function. Since the existence of some partial Cauchy surface is
assumed in most singularity theorems, this result can be used to
weaken the assumptions of those theorems.  This result may also
prove important when applied to the study of the real Universe.
Indeed, let us recall that Hawking's and Hawking and Penrose's
theorems \cite{hawking73} suggest the existence of an incomplete
causal curve which however could well be timelike. In other words
our Universe may perhaps be geodesically null complete but timelike
incomplete, in which case the main theorem could be applied in the
``positive'' way to infer the existence of a time function for the
Universe. In fact theorem \ref{pfk} shows that the assumption of
null geodesic completeness leads to consequences that do not
contradict physical observations.

The Penrose's singularity theorem seems to go against this
conclusion as it predicts null incompleteness in those cases in
which trapped surfaces form. It must be remarked, however, that
Penrose's theorem assumes the existence of a non-compact Cauchy
hypersurface thus (i) it assumes the existence of a time function
and hence it cannot be used to dismiss the conclusion that a time
function exists and (ii) for spacetimes with compact slices its
conclusions do not hold. However, even if the space slices are
compact, one can still extract information from the proof of
Penrose's theorem \cite[theorem 14.61]{oneill83}. The result is
that, roughly speaking, black holes do not exist. Trapped surfaces
may form and locally they may resemble black holes but the global
behavior would be quite different. Indeed, their horizons would
finally join and swallow the whole spacetime. Thus, without an
``exterior'', the ``interior'' could not be distinguished from a
usual spacetime.


In conclusion the  theorems of this work can be used physically,
either in the ``negative'' way, to prove the existence of
singularities or of chronology violating regions, or in the
``positive'' way to argue for the existence of a time function or of
global hyperbolicity. In either case they shade new light on the
existence and role of time at cosmological scales.



\section*{Acknowledgments}
This work has been partially supported by GNFM of INDAM and by MIUR
under project PRIN 2005 from Universit\`a di Camerino.


\begin{thebibliography}{10}

\bibitem{akolia81}
Akolia, G.~M., Joshi, P., and Vyas, U.: On almost causality.
\newblock J. Math. Phys. \textbf{22}, 1243--1247 (1981)

\bibitem{beem76}
Beem, J.~K.: Conformal changes and geodesic completeness.
\newblock Commun. Math. Phys. \textbf{49}, 179--186 (1976)

\bibitem{beem96}
Beem, J.~K., Ehrlich, P.~E., and Easley, K.~L.: \emph{Global
Lorentzian
  Geometry}.
\newblock New York: Marcel {D}ekker {I}nc. (1996)

\bibitem{bernal04}
Bernal, A.~N. and {S\'a}nchez, M.: Smoothness of time functions and
the metric
  splitting of globally hyperbolic spacetimes.
\newblock Commun. Math. Phys. \textbf{257}, 43--50 (2005)

\bibitem{borde87}
Borde, A.: Geodesic focusing, energy conditions and singularities.
\newblock Class. Quantum Grav. \textbf{4}, 343--356 (1987)

\bibitem{chicone80}
Chicone, C. and Ehrlich, P.: Line integration of {R}icci curvature
and
  conjugate points in {L}orentzian and {R}iemannian manifolds.
\newblock Manuscripta Math. \textbf{31}, 297--316 (1980)

\bibitem{geroch70}
Geroch, R.: Domain of dependence.
\newblock J. Math. Phys. \textbf{11}, 437--449 (1970)

\bibitem{geroch79}
Geroch, R. and Horowitz, G.~T.: \emph{Global structure of
spacetimes},
  Cambridge: Cambridge {U}niversity {P}ress, vol. General relativity: {A}n
  {E}instein centenary survey, pages 212--292 (1979)

\bibitem{hawking68}
Hawking, S.~W.: The existence of cosmic time functions.
\newblock Proc. {R}oy. {S}oc. {L}ondon, series {A} \textbf{308}, 433--435
  (1968)

\bibitem{hawking92}
Hawking, S.~W.: Chronology protection conjecture.
\newblock Phys. Rev. D \textbf{46}, 603--611 (1992)

\bibitem{hawking73}
Hawking, S.~W. and Ellis, G. F.~R.: \emph{The Large Scale Structure
of
  Space-Time}.
\newblock Cambridge: Cambridge {U}niversity {P}ress (1973)

\bibitem{hawking70}
Hawking, S.~W. and Penrose, R.: The singularities of gravitational
collapse and
  cosmology.
\newblock Proc. {R}oy. {S}oc. {L}ond. A \textbf{314}, 529--548 (1970)

\bibitem{krasnikov02}
Krasnikov, S.: No time machines in classical general relativity.
\newblock Class. Quantum Grav. \textbf{19}, 4109–4129 (2002)

\bibitem{kriele89}
Kriele, M.: The structure of chronology violating sets with compact
closure.
\newblock Class. Quantum Grav. \textbf{6}, 1607--1611 (1989)

\bibitem{minguzzi07c}
Minguzzi, E.: Limit curve theorems in {L}orentzian geometry (2007).
\newblock Preprint: 0712.3942

\bibitem{minguzzi07b}
Minguzzi, E.: The causal ladder and the strength of {$K$}-causality.
{I}.
\newblock Class. Quantum Grav. \textbf{25}, 015009 (2008)

\bibitem{minguzzi07}
Minguzzi, E.: The causal ladder and the strength of {$K$}-causality.
{II}.
\newblock Class. Quantum Grav. \textbf{25}, 015010 (2008)

\bibitem{minguzzi07f}
Minguzzi, E.: Non-imprisonment conditions on spacetime (2008).
\newblock J. Math. Phys. In press. Preprint: 0712.3949

\bibitem{minguzzi07e}
Minguzzi, E.: Weak distinction and the optimal definition of causal
continuity.
\newblock Class. Quantum Grav. \textbf{25}, 075015 (2008)

\bibitem{minguzzi06c}
Minguzzi, E. and S\'anchez, M.: \emph{The causal hierarchy of
spacetimes},
  Zurich: Eur. Math. Soc. Publ. House, vol. H. Baum, D. Alekseevsky (eds.),
  Recent developments in pseudo-Riemannian geometry, of \emph{{ESI} Lect. Math.
  {P}hys.}, pages 299 -- 358 (2008).
\newblock Gr-qc/0609119

\bibitem{newman89}
Newman, R. P. A.~C.: Black holes without singularities.
\newblock Gen. Relativ. Gravit. \textbf{21}, 981–--995 (1989)

\bibitem{oneill83}
{O'N}eill, B.: \emph{Semi-{R}iemannian Geometry}.
\newblock San Diego: Academic {P}ress (1983)

\bibitem{penrose79}
Penrose, R.: \emph{Singularities and time-asymmetry}, Cambridge:
Cambridge
  {U}niversity {P}ress, vol. General relativity: {A}n {E}instein centenary
  survey, pages 581--638 (1979)

\bibitem{seifert71}
Seifert, H.: The causal boundary of space-times.
\newblock Gen. Relativ. Gravit. \textbf{1}, 247--259 (1971)

\bibitem{seifert77}
Seifert, H.~J.: Smoothing and extending cosmic time functions.
\newblock Gen. Relativ. Gravit. \textbf{8}, 815--831 (1977)

\bibitem{senovilla97}
Senovilla, J. M.~M.: Singularity theorems and their consequences.
\newblock Gen. Relativ. Gravit. \textbf{30}, 701--848 (1998)

\bibitem{sorkin96}
Sorkin, R.~D. and Woolgar, E.: A causal order for spacetimes with
{$C^0$}
  {L}orentzian metrics: proof of compactness of the space of causal curves.
\newblock Class. Quantum Grav. \textbf{13}, 1971--1993 (1996)

\bibitem{thorne93}
Thorne, K.: \emph{Closed Timelike Curves}, Bristol, England:
Institute of
  Physics Publishing, vol. General Relativity and Gravitation, pages 295--315
  (1993)

\bibitem{tipler77}
Tipler, F.~J.: Singularities and causality violation.
\newblock Ann. {P}hys. \textbf{108}, 1--36 (1977)

\bibitem{tipler78}
Tipler, F.~J.: General relativity and conjugate ordinary
differential
  equations.
\newblock J. Diff. Eq. \textbf{30}, 165--174 (1978)

\bibitem{tipler78b}
Tipler, F.~J.: On the nature of singularities in general relativity.
\newblock Phys. Rev. D \textbf{15}, 942--945 (1978)

\bibitem{visser96}
Visser, M.: \emph{Lorentzian Wormholes}.
\newblock New York: {Springer-Verlag} (1996)

\bibitem{woodhouse73}
Woodhouse, N. M.~J.: The differentiable and causal structures of
space-time.
\newblock J. Math. Phys. \textbf{14}, 495--501 (1973)

\end{thebibliography}

\end{document}